\title{
  Price Setting on a Network\thanks{
    I would like to thank 
    Federico Boffa,
    Doruk Cetemen,
    Jeff Ely,
    Francesco Giovannoni,
    Gautam Gowrisankaran,
    Sanjeev Goyal,
    Edoardo Grillo,
    Andrei Hagiu,
    Marit Hinnosaar,
    Alexander Matros,
    Ignacio Monz\'{o}n,
    Salvatore Piccolo,
    Fernando Vega-Redondo,
    David Rivers,
    Georg Schaur,
    Cristi\'{a}n Troncoso-Valverde,
    and Julian Wright
    as well as seminar participants at 
    the Collegio Carlo Alberto,
    University of South Carolina, 
    University of Tennessee, 
    Appalachian State University, 
    JEI (Barcelona),
    Lancaster Game Theory Conference,
    University of Bath, 
    Cardiff Business School, 
    Bristol University, and
    First UniBg Economic Theory Symposium
    for their comments and suggestions.
  }
}
\date{\today\thanks{First draft: May 13, 2018. The latest version is at 
    \href{http://toomas.hinnosaar.net/pricing_network.pdf}{\url{toomas.hinnosaar.net/pricing_network.pdf}}
}}
\author{
Toomas Hinnosaar%
\thanks{Collegio Carlo Alberto, \href{mailto:toomas@hinnosaar.net}{\url{toomas@hinnosaar.net}}. %
}%
}
\newcommand{\dd}[3][]{\frac{d^{#1} #2}{d {#3}^{#1}}}
\newcommand{\R}{\mathbb R}
\newcommand{\N}{\mathbb N}
\newcommand{\myref}[1]{\cref{#1}\mynameref{#1}{\csname r@#1\endcsname}}
\newcommand{\Myref}[1]{\Cref{#1}\mynameref{#1}{\csname r@#1\endcsname}}
\def\mynameref#1#2{%
    \begingroup
    \edef\@mytxt{#2}%
    \edef\@mytst{\expandafter\@thirdoffive\@mytxt}%
    \ifx\@mytst\empty\else
    \space(\nameref{#1})\fi
    \endgroup
}
\crefname{assumption}{assumption}{assumptions}
\Crefname{assumption}{Assumption}{Assumptions}
\crefname{condition}{condition}{conditions}
\Crefname{condition}{Condition}{Conditions}
\crefname{figure}{figure}{figures}
\crefname{equation}{equation}{equations}
\newcommand{\bp}{\boldsymbol{p}}
\newcommand{\gN}{\mathcal{N}}
\newcommand{\oP}{\overline{P}}
\newcommand{\madj}{\boldsymbol{A}}
\newcommand{\madjother}{\boldsymbol{B}}
\newcommand{\vone}{\boldsymbol{1}}
\newcommand{\vzero}{\boldsymbol{0}}
\newcommand{\ve}{\boldsymbol{e}}
\newcommand{\infl}{I}
\newcommand{\bonacich}{B}
\newcommand{\Nobs}{\mathcal{O}}
\newcommand{\Nuno}{\mathcal{U}}
\newcommand{\Ninf}{\mathcal{I}}
\newcommand{\pobs}{\boldsymbol{p}}
\tikzset{>={Latex[width=2mm,length=3mm]},every on chain/.append style={join},every join/.style={->}} 
\declaretheorem[name=Proposition]{proposition}
\declaretheorem[name=Theorem]{theorem}
\declaretheorem[name=Corollary]{corollary}
\declaretheorem[name=Lemma]{lemma}
\declaretheorem[name=Assumption]{assumption}
\begin{document}
\maketitle
\begin{abstract}
Most products are produced and sold by supply chain networks, where an interconnected network of producers and intermediaries set prices to maximize their profits. I show that there exists a unique equilibrium in a price-setting game on a network. The key distortion reducing both total profits and social welfare is multiple-marginalization, which is magnified by strategic interactions. Individual profits are proportional to influentiality, which is a new measure of network centrality defined by the equilibrium characterization. The results emphasize the importance of the network structure when considering policy questions such as mergers or trade tariffs.
\end{abstract}

\emph{JEL}: C72, L14, D43

\emph{Keywords}: price setting, networks, sequential games, multiple-marginalization, supply chains, mergers, tariffs, trade, centrality

\section{Introduction}

Most products are produced and sold by supply chain networks, where an interconnected network of producers, suppliers, and intermediaries choose prices to maximize their own profits. Some firms in the network compete with other firms offering similar products or services, some may be regulated, whereas others enjoy market power in their niche and can affect the pricing of the final good.

For example, in the book publishing industry, a publisher purchases content from authors, services from editors and marketing firms, and outsources printing to a printer, who in turn purchases paper, ink, and other supplies from outside companies. The publisher is typically also not selling books directly to consumers but has contracts with distributors, who in turn deal with wholesalers, retail chains, and individual retailers. Many of these companies have some market power and earn positive rents. According to estimates in the New York Times, from a typical \$26 book, the retail side gets about half, printing and transport costs are about 13\%, and the author is generally paid about 15\%, which suggests that the markups to various parties play a major role.\footnote{New York Times ``Math of Publishing Meets the E-Book'', by Motoko Rich (Feb. 28, 2010), \url{https://www.nytimes.com/2010/03/01/business/media/01ebooks.html}.}

The main question of this paper is how are prices set on supply chain networks? In the model, I allow the firms to have market power over their own products and influence the pricing of other monopolists on a network of influences. The characterization needs to overcome two difficulties. First, the demand function is typically non-linear, which means that the optimality conditions are non-linear and therefore recursively solving for best-response functions becomes analytically intractable with most networks. Second, the decisions on the network are interconnected---when firms in different parts of the network make decisions they may have different information and may influence different firms. This means that the problem is dynamic but cannot be solved sequentially, i.e.\ with backward induction. I overcome these problems by characterizing best-responses by their inverses and aggregating all necessary conditions into one necessary condition for equilibrium. I then show that this condition has a unique solution and the implied behavior is indeed an equilibrium.

The main result of the paper is a characterization theorem. Under a few regularity conditions, there exists a unique equilibrium. I provide full characterization for the equilibrium and show how to compute it. The equilibrium condition has a natural interpretation. It equalizes the difference between the equilibrium price of the final good and the total marginal cost with a weighted sum of \emph{influences} in all levels. The most basic level of influence is how marginal price increase influences firms' own profit directly and there is one such influence for each firm. The second level of influence specifies how price increase changes the behavior of firms directly influences by this change. But there are also higher-level influences, as each price change if an influenced firm may have further indirect influences on the behavior of other firms. All those levels of influences are weighed with endogenous weights that are determined by the shape of the demand function and the equilibrium behavior.

A natural question is how social welfare and total profits depend on the network structure? I show that the key distortion that reduces both profits and welfare is multiple-marginalization. Marginalization problem is increasing in the number of firms and is magnified by strategic interactions. The network structure plays a crucial role in determining the efficiency losses. The main takeaway from this is that policy analysis needs to take into account how a particular regulation affects the underlying influences on the supply chain network. For example, a merger can be seemingly efficiency-improving, but if it leads to more influence by the merged firms, it may revert the conclusions about efficiency. Similarly, trade restrictions and tariffs are often designed to change the supply chains. To evaluate the impact of such policies, it is important to take into account their impact on the network of influences.

Another natural question is who are the firms that are more powerful in the network and how does this depend on the network structure? Many firms in the supply chain network have some market power and therefore earn positive profits. But intuitively, firms that affect other firms are more powerful. Indeed, the equilibrium characterization defines a natural measure of influentiality of each firm, which depends on how many other firms the firm influences, but also how influential are the other firms that the firm influences. The firms' markups and profits are proportional to this measure. The measure is a generalization of standard centrality measures, where the weights for the direct and indirect influences are endogenously defined by the demand function. In some special cases, this measure of influentiality reduces to standard measures of network centrality: for example when the demand function is a linear or power function, the influentiality measure simplifies to Bonacich centrality, but I provide other examples which make the measure equivalent to degree centrality or even independent of network structure.

The rest of the paper is structured as follows. 
The next section discusses the related literature. 
\Cref{S:model} introduces the model, illustrates it with some examples of networks, and discusses the regularity assumptions. 
\Cref{S:characterization} provides the main characterization result and describes the main components of the proof---in particular, how the characterization overcomes two main complications that arise from non-linearities in demand and interconnected decisions on the network. 
\Cref{S:marginalization} interprets the characterization by comparing it with some known benchmarks and then discusses the key distortion---multiple-marginalization. 
\Cref{S:influentiality} studies which of the firms are more influential and discusses the relationship between the implied influentiality measure with standard network centrality measures.
\Cref{S:computing} describes how to apply the characterization result to compute the equilibrium and provides further results for some of the most common demand functions (including linear, power, and logit).
\Cref{S:discussion} concludes and discusses policy implications for mergers and trade.
All proofs are in \cref{A:proofs}.

\section{Related Literature}

\paragraph{Network games.}
Price setting on networks and supply chains is extensively studied in several branches of economics literature. Perhaps most directly the paper contributes to the literature on network games, where players take actions on a fixed network and the payoffs depend both on their own and their neighbors' actions. 
According to a survey by \cite{jackson_games_2015}, most works in this literature can be divided into two groups. First, a lot of progress has been made in games with quadratic payoffs (or more generally, payoffs that imply linear best-responses). A seminal paper is \cite{ballester_whos_2006} who show that in such games the equilibrium actions are proportional to Bonacich centrality. \cite{bramoulle_public_2007,calvo-armengol_peer_2009,bramoulle_strategic_2014}; and \cite{allouch_cost_2017} study more general variation of this game and find that Bonacich centrality still determines the equilibrium behavior. \cite{candogan_optimal_2012,bloch_pricing_2013,fainmesser_pricing_2016,ushchev_price_2018} study pricing of goods with network externalities with quadratic payoffs and find that optimal pricing leads to discounts that are proportional to Bonacich centrality. 
\cite{bimpikis_cournot_2019} study Cournot competition on a bipartite network, where the sellers Cournot-compete in markets which they have access to. They show that when the demands are linear and costs quadratic, the equilibrium behavior is proportional to Bonacich centrality.

The second branch of network games studies games with non-quadratic payoffs and is generally able to analyze only qualitative properties of the equilibria rather than provide a full characterization.\footnote{An exception is \cite{choi_trading_2017}, who study price competition on networks, where consumers choose the cheapest paths from source to destination and intermediaries set prices, thus making the game a generalization of Bertrand competition. \cite{choi_trading_2017} provide a full equilibrium characterization of prices without relying on quadratic preferences.}\footnote{
In settings where the players interact on a network randomly, the characterization is often more tractable. This line of research includes, for example, \cite{manea_bargaining_2011,abreu_bargaining_2012} who study bargaining on networks.} A seminal paper is \cite{galeotti_network_2010}.
Compared to these works, in this paper, I provide a characterization result for a game on a network with relatively general payoff structure. The characterization defines a natural new measure of influentiality and the firms' choices and payoffs are proportional to this measure. In special cases when the best-response functions are linear (such as linear demand), this measure is proportional to Bonacich centrality. But as the weights are endogenously defined by the demand function and equilibrium behavior, for all other demand functions the measure of influentiality differs from Bonacich centrality. Indeed, I provide examples of special cases where it can be equivalent to degree centrality or even independent of the network structure.

\paragraph{Industrial organization.}
The paper contributes also to theoretical literature on vertical integration. \cite{spengler_vertical_1950} was the first to describe the double-marginalization problem and after this the literature has extensively studied the benefits and costs of vertical control, including \cite{mathewson_economic_1984,grossman_costs_1986,rey_logic_1986,salinger_vertical_1988,salinger_meaning_1989,riordan_anticompetitive_1998,ordover_equilibrium_1990,farrell_horizontal_1990,bolton_incomplete_1993,kuhn_excess_1999,nocke_vertical_2007}; and \cite{buehler_making_2013}.\footnote{See \cite{vickers_vertical_1991} for a literature review.}
Empirical work shows that production has a network structure \citep{atalay_network_2011}, merger-like behaviors such as airline codeshare agreements may fail to eliminate double-marginalization \citep{gayle_efficiency_2013}, and removing vertical restraints may hurt consumers \citep{luco_vertical_2018}. This literature analyzes many forms of competition and contract structures, but very little is known about networks with more than two levels (upstream-downstream). In this paper, I focus on a simple contract structure (posted prices) and allow only relatively simple competition rules (either price-takers or monopolists), but extend the analysis to general network structures.

\paragraph{Trade and macroeconomics.}
Supply chains are also studied in international trade and macroeconomics.\footnote{For a surveys of strategic trade literature, see \cite{rauch_business_2001} and \cite{spencer_strategic_2008}.}
Empirical work by \cite{bernard_wholesalers_2010,ahn_role_2011,jones_intermediate_2011} shows that intermediaries play a major role in international trade and work by \cite{alfaro_prices_2016} and \cite{conconi_final_2018} show that tariffs and trade barriers have a major impact on the supply chains that are formed.
\cite{chaney_network_2014} documents that international trade has a network structure---firms export only to markets where they have contacts. 
\cite{spencer_vertical_1991} were the first to embed vertical restraints into the trade model,  \cite{antras_intermediated_2011} introduced a trade model where intermediaries reduce search frictions, and \cite{rauch_network_2004} showed that intermediaries may have insufficient incentives for efficient outcomes.
The literature on the network formation in production and trade, such as \cite{oberfield_theory_2018} and \cite{liu_industrial_2018}, is perhaps the closest to the current paper and provides complementary results. It studies network formation, where entrepreneurs choose which inputs to use in their (typically Cobb-Douglas) production technology. In contrast to most of the trade and macroeconomics literature, I study optimal price-setting on a fixed network, where many agents, including the middle-men, have some market power.
  
\paragraph{Supply chain management.}
There is also a specialized supply chain management literature, started by \cite{forrester_industrial_1961}. This literature mostly focuses on other dimensions of supply chain management, but perhaps the closest to the magnified multiple-marginalization problem in the current paper is the idea of the bullwhip effect, introduced by \cite{lee_information_2004}. If firms in the supply chain make sequential choices learning about demand only from orders, then the distortion tends to increase with each additional level of interactions. 
Empirical research by \cite{metters_quantifying_1997} shows the quantitative importance of this effect. \cite{bhattacharya_review_2011} provide a literature review, highlighting 19 different causes documented by the literature. These causes are mostly statistical, behavioral, and practical, and one contribution of the current paper is to add multiple-marginalization to this list---even if all firms in the supply chain network are perfectly informed and fully rational, the monopoly distortions accumulate with more inter-dependencies on networks.
There are a few works that bring related ideas to the supply chain management literature. 
\cite{liu_pricing_2007} study a model similar to this paper, but with just two firms and find that Stackelberg leadership leads to worse outcomes. \cite{perakis_price_2007} study efficiency loss from decentralization (in terms of the price of anarchy) in supply chains. In this paper, I show that the marginalization problem is a general problem for supply chains and is magnified by strategic interactions.

\paragraph{Sequential and aggregative games.}
Methodologically, the paper builds on recent advances in sequential and aggregative games. In a special case, where the firms are making independent decisions (and thus the network is irrelevant), the model is an aggregative game. Aggregative games were first proposed by \cite{selten_preispolitik_1970} and there has been recent progress in aggregative games literature by \cite{jensen_aggregative_2010,martimort_representing_2012}; and \cite{acemoglu_aggregate_2013} that has been used to shed new light on questions in industrial organization by \cite{nocke_multiproduct-firm_2018}.\footnote{For a literature review on aggregative games, see \cite{jensen_aggregative_2017}.
Note that when the number of players is infinitely large, then these games become mean-field games (see for example \cite{jovanovic_anonymous_1988}).} One classical aggregative game is a contest and this paper builds on recent work on sequential contests by \cite{hinnosaar_optimal_2018} extending the methodology to networks and asymmetric costs.\footnote{There are other papers belonging to the intersection of contests and networks literature, including \cite{franke_conflict_2015} and \cite{matros_contests_2018} who study contests on networks and \cite{goyal_competitive_2019} who study contagion on networks.}

\section{Model} \label{S:model}

\subsection{Setup}

The model is static and studies the supply of a single final good. The final good has a demand function $D(P)$, where $P$ is its price. The production and supply process requires $m+n$ inputs. I normalize the units of inputs so that one unit of each input is required to produce one unit of output. 

Input $i$ is produced by firm $i$, that has a constant marginal cost $c_i$ and a price $p_i$ for its product.\footnote{The assumptions that marginal costs are constant and that firms choose prices rather than more complex contracts are restrictive and can be relaxed, but would lead to much less tractable analysis.} The price $p_i$ is firm $i$'s per-unit revenue net of payments to other firms in the model. Due to normalization, the quantity of firm $i$'s product (i.e.\ quantity of input $i$) is equal to $D(P)$. Therefore firm $i$ gets profit $\pi_i(\bp) = (p_i-c_i) D(P)$, where $\bp=(p_1,\dots,p_{m+n})$ and the price of the final good is the sum of all net prices, $P = \sum_{i=1}^{m+n} p_i$.

I assume that $m$ of the inputs $n+1,\dots,n+m$ are produced by \emph{price-takers}, who treat their prices as fixed. Such firm $i$ may operate in a competitive sector or compete as a Bertrand competitor, in which case its price is equal to the marginal cost of the second cheapest firm in this sector. The firm could also operate in a regulated industry and its price is set by a regulator. 

The remaining $n$ firms $1,\dots,n$ are \emph{monopolists}, who set their prices strategically, i.e.\ maximizing profits, anticipating the impact on sales of the final good. To complete the description of the model, I need to specify how the price $p_i$ of firm $i$ affects the behavior of other monopolists, which I do by introducing the network of influences.
Formally, network of influences consists of all $n$ monopolists as nodes and edges that define influences. The edges are described as an $n \times n$ adjacency matrix $\madj$, where an element $a_{ij} = 1$ indicates that firm $i$ \emph{influences} firm $j$. 
That is, when firm $j$ chooses price $p_j$, then it takes price $p_i$ as given and responds optimally to it. Of course, firm $i$ knows this and when choosing $p_i$, it knows that $j$ will respond optimally. Finally, if $i$ and $j$ are not directly linked, i.e.\ $a_{ij}=a_{ji}=0$, then neither responds to deviations by the other firm. They expect the other firm to behave according to its equilibrium strategy. For convenience, I assume that the diagonal elements $a_{ii}=0$. I will discuss a few examples of the network of influences in the next subsection.

Let me make three remarks about the model here. First, the price-takers are non-strategic players, so without loss of generality I replace them by a single parameter $c_0 = \sum_{i=n+1}^{n+m} p_i$. Parameter $c_0$ can be interpreted as a cost for the supply chain. I denote the total per-unit cost to the supply chain by $C = c_0 + \sum_{i=1}^n c_i$.

Second, the analysis does not require that each firm on the supply chain is either always a price-taker or always a monopolist. The assumption that I use in the characterization is that monopolists behave according to their local optimality condition, whereas price-takers take their prices locally as given. It could be that a firm who is a monopolist in one situation becomes a price-taker when the model parameters change.

Third, the network of influences makes the game sequential. If $a_{ij}=1$ then firm $i$ sets its price $p_i$ before firm $j$. Firm $j$ then observes $p_i$ and may respond optimally. Of course, firm $i$ knows this and therefore can anticipate the response of firm $j$. I am looking for a pure-strategy\footnote{Although I am not excluding the possibility of mixed-strategy equilibria, I show that there always exists unique pure-strategy equilibrium, so it is natural to focus on it.} subgame-perfect Nash equilibrium, where players take some of the choices of other players as given and maximize their profits, anticipating the impact on other players' choices and the final good demand.

\subsection{Examples of Network of Influences}

The network of influences I introduce in this paper is related but not the same as the supply-chain network. Typical supply-chain network specifies the flows of goods and services (material flows), as well as the flows of money and information. The specifics of these flows are neither necessary nor sufficient to characterize pricing decisions. For pricing decisions, the model needs to specify what is known to each monopolist at the moment it makes a pricing decision and how it expects this decision to influence the choices of other firms. In other words, the model needs to specify the observability of prices and commitment power of firms. As described above, I model this by assuming that there is a commonly known network, such that whenever there is an edge from $i$ to $j$, firm $j$ observes $p_i$ and therefore takes it into account in its optimization problem. 

Consider first a very simple case with just two firms, $F$ (final goods producer) and $R$ (retailer). Then there are three possible networks, illustrated by \cref{F:twoplayer}. First, case (\cref{F:twoplayer_nolinks}) has no influences, so that firms set their prices $p_F$ and $p_R$ independently and the final good is sold at $P=p_F+p_R$. This could be a reasonable assumption in many cases. For example, if both are big firms that interact with many similar firms. Then the final goods producer $F$ does not best-respond to a particular retailer $R$ but to equilibrium $p_R^*$ of a representative retailer. Similarly, the retailer does not best-respond to deviations by particular producer $F$, but to equilibrium price $p_F^*$ of a representative producer. Another example where it is natural to make this assumption is when two firms are separately selling perfectly complementary products to final consumers. 

Similarly, there could be many reasons for strategic influences. For example, a downstream influence from producer $F$ to retailer $R$ (\cref{F:twoplayer_downstream}) may arise with a large producer and small retailer, where the representative retailer reacts optimally to pricing by $F$. The large producer knows that retailers respond to its pricing and therefore takes into account how a representative retailer best-responds.
Of course, the influence could go in the opposite direction (as in \cref{F:twoplayer_upstream}) for the same reason---a large retailer $R$ knows that a small producer $F$ will best-respond to its price changes. In this paper, I take these influences as given and simply assume that some firms have more commitment power than others for exogenous reasons.
\begin{figure}[!ht]
  \centering
  \subfloat[No influences]{
  \scalebox{0.85}{
      \begin{tikzpicture}
      [
      cnode/.style={draw=black,fill=#1,minimum width=3mm,circle},
      ]
      \node[cnode=white] (nF)   at (0,0)  {$F$};
      \node[cnode=white] (nR)   at (2,0)  {$R$};
      \end{tikzpicture}
  }
  \label{F:twoplayer_nolinks}}
  \qquad
  \subfloat[Downstream]{
  \scalebox{0.85}{
      \begin{tikzpicture}
      [
      cnode/.style={draw=black,fill=#1,minimum width=3mm,circle},
      ]
      \node[cnode=white] (nF)   at (0,0)  {$F$};
      \node[cnode=white] (nR)   at (2,0)  {$R$};
      \draw[->] (nF) -- (nR);
      \end{tikzpicture}
  }
  \label{F:twoplayer_downstream}}
  \subfloat[Upstream]{
  \scalebox{0.85}{
      \begin{tikzpicture}
      [
      cnode/.style={draw=black,fill=#1,minimum width=3mm,circle},
      ]
      \node[cnode=white] (nF)   at (0,0)  {$F$};
      \node[cnode=white] (nR)   at (2,0)  {$R$};
      \draw[<-] (nF) -- (nR);
      \end{tikzpicture}
  }
  \label{F:twoplayer_upstream}}
  \caption{Example: Three possible two-player networks.} \label{F:twoplayer}  
\end{figure}

Let me illustrate the network of influences with three more examples. \Cref{F:strongretailer} depicts an example of a retail chain with downstream-to-upstream influences. In this example, there is a strong retailer $R$, who can commit to adding a markup $p_R$ on top of the wholesale price $P_W$, so that the price of the final good will be $P = P_W +p_r$. The wholesaler $W$ takes $p_R$ as given and commits to its markup $p_W$, so that when the distributor's price is $P_D$, then wholesale price is $P_W = P_D+p_W$ and therefore final good price $P=P_D+p_W+p_R$. Then distributor $D$ sets its markup $p_D$ taking markups $p_W$ and $p_R$ as given. Finally, the final good producer $F$ sets a price $p_F$, taking into account that final consumer will pay $P=p_F+p_D+p_W+p_R$.
\begin {figure}[!ht]
\begin{minipage}{0.5\textwidth}
  \centering
  \scalebox{0.85}{
  \begin{tikzpicture}
  [
  cnode/.style={draw=black,fill=#1,minimum width=3mm,circle},
  ]
%

  \node[cnode=white] (nF)   at (4,-2)   {$F$};

  \node[cnode=white] (nD)   at (6,-2)   {$D$};
  \node[cnode=white] (nW)   at (8,-2)   {$W$};
  \node[cnode=white] (nR)   at (10,-2)  {$R$};

%

  \draw[->] (nR) -- (nW);
  \draw[->] (nW) -- (nD);
  \draw[->] (nD) -- (nF);

  \draw[->,out=130,in=50] (nW) edge (nF);
  \draw[->,out=215,in=325]  (nR) edge (nD);
  \draw[->,out=220,in=320] (nR) edge (nF);
\end{tikzpicture}  
  }
\end{minipage}
\begin{minipage}{0.5\textwidth}
  \[
  \madj = 
  \kbordermatrix{
      & F & D & W & R \\
    F & 0 & 0 & 0 & 0 \\
    D & 1 & 0 & 0 & 0 \\
    W & 1 & 1 & 0 & 0 \\
    R & 1 & 1 & 1 & 0 \\
  }
  \]
\end{minipage}
  \caption{Example: Retail chain with downstream-to-upstream influences.} \label{F:strongretailer}  
\end{figure}

Influences can also go in the opposite direction. \Cref{F:productionchain} gives an example of a production chain. In this example, there is a small producer $F$ who produces final good and uses three inputs, produced by intermediate good producers $I_1, I_2,$ and $I_3$. Firm $F$ takes the prices of its inputs $P_{I_1}, P_{I_2}$, and $P_{I_3}$ as given and chooses the price for the final good, $P_F = P$. 
Intermediate good producer $I_2$ uses two raw materials as inputs, produced respectively by $R_1$ and $R_2$. In this example, firm $I_2$ when choosing $P_{I_2}$ takes $P_{R_1}$ and $P_{R_2}$ as given. Importantly, as firms $I_1$ and $I_3$ do not used these inputs, they do not know the realized prices offered by $R_1$ and $R_2$ but they can make equilibrium conjectures. 
Therefore, while firms $I_1$ and $I_3$ take into account the equilibrium prices $P_{R_i}^*$ in their optimization problem, they cannot respond to potential deviations in $P_{R_i}$, whereas firms $I_2$ and $F$ can and do respond to these deviations.
It is convenient to redefine prices as net prices, net of transfers to the other firms, i.e.\ 
$p_{R_1}=P_{R_1}, p_{R_2}=P_{R_2}, p_{I_1}=P_{I_1}, p_{I_2}= P_{I_2} - p_{R_1}-p_{R_2}$, $p_{I_3}=P_{I_3}$, and $p_F = P - p_{I_1}-p_{I_2}-p_{I_3}-p_{R_1}-p_{R_2}$, so that $\sum_i p_i = P$.
\begin {figure}[!ht]
\begin{minipage}{0.5\textwidth}
  \centering
  \scalebox{0.85}{
  \begin{tikzpicture}
  [
  cnode/.style={draw=black,fill=#1,minimum width=3mm,circle},
  ]
  \node[cnode=white] (nR1) at (0,-0.66) {$R_1$};
  \node[cnode=white] (nR2) at (0,-3.33) {$R_2$};

  \node[cnode=white] (nM1)  at (2,-0.66){$I_1$};
  \node[cnode=white] (nM2)  at (2,-2)   {$I_2$};
  \node[cnode=white] (nM3)  at (2,-3.33){$I_3$};

  \node[cnode=white] (nF)   at (4,-2)   {$F$};


  \draw[->] (nR1) -- (nM2);
  \draw[->] (nR2) -- (nM2);

  \draw[->] (nM1) -- (nF);
  \draw[->] (nM2) -- (nF);
  \draw[->] (nM3) -- (nF);

  \draw[->] (nR1) edge (nF);
  \draw[->] (nR2) edge (nF);


  \end{tikzpicture}  
  }
\end{minipage}
\begin{minipage}{0.5\textwidth}
  \[
  \madj = 
  \kbordermatrix{
        & R_1 & R_2 & I_1 & I_2 & I_3 & F \\
    R_1 & 0 & 0 & 0 & 1 & 0 & 1 \\
    R_2 & 0 & 0 & 0 & 1 & 0 & 1 \\
    I_1 & 0 & 0 & 0 & 0 & 0 & 1 \\
    I_2 & 0 & 0 & 0 & 0 & 0 & 1 \\
    I_3 & 0 & 0 & 0 & 0 & 0 & 1 \\
    F   & 0 & 0 & 0 & 0 & 0 & 0 \\
  }
  \]
\end{minipage}
\caption{Example: Production chain with upstream-to-downstream influences.} \label{F:productionchain}  
\end{figure}

There is no reason to assume that the flows of influence are all going in the same direction or that the network is a tree. \Cref{F:scn_unordered} gives another example, where the same raw material $L$ (labor) is used by three firms, $T$ (transport), $F$ (final goods producer), and $C$ (communication). These three firms set their prices independently, but $F$ additionally takes the markups of the $D$ (distributor) and $R$ (retailer) as given.
\begin {figure}[!ht]
\begin{minipage}{0.5\textwidth}
  \centering
  \scalebox{0.85}{
  \begin{tikzpicture}
  [
  cnode/.style={draw=black,fill=#1,minimum width=3mm,circle},
  ]
  \node[cnode=white] (nL)   at (0,-2) {$L$};

  \node[cnode=white] (nT)  at (2,-0.66){$T$};
  \node[cnode=white] (nF)  at (2,-2)   {$F$};
  \node[cnode=white] (nC)  at (2,-3.33){$C$};

  \node[cnode=white] (nD)   at (4,-2)  {$D$};
  \node[cnode=white] (nR)   at (6,-2)  {$R$};

  \draw[->] (nL) -- (nT);
  \draw[->] (nL) -- (nF);
  \draw[->] (nL) -- (nC);

  \draw[->] (nR) -- (nD);
  \draw[->] (nD) -- (nF);
  
  \draw[->,out=130,in=50]   (nR) edge (nF);
\end{tikzpicture}
  }
\end{minipage}
\begin{minipage}{0.5\textwidth}
  \[
  \madj = 
  \kbordermatrix{ 
    & L & T & F & C & D & R \\
    L & 0 & 1 & 1 & 1 & 0 & 0 \\ 
    T & 0 & 0 & 0 & 0 & 0 & 0 \\ 
    F & 0 & 0 & 0 & 0 & 0 & 0 \\ 
    C & 0 & 0 & 0 & 0 & 0 & 0 \\ 
    D & 0 & 0 & 1 & 0 & 0 & 0 \\ 
    R & 0 & 0 & 1 & 0 & 1 & 0 \\
  }
  \]
\end{minipage}
  \caption{Example: a network with a small producer and a common raw-material producer} \label{F:scn_unordered}
\end{figure}

\subsection{Regularity Assumptions}

I make three technical assumptions that are sufficient for the existence and uniqueness of the equilibrium. These assumptions are natural in most applications and simplify the analysis, but could be relaxed. The first assumption specifies the class of networks.
\begin{assumption} \label{A:ass_network}
  Network $\madj$ is acyclic and transitive.\footnote{Acyclicity: $\nexists i_1,\dots,i_k$ such that $a_{i_1 i_2}=\dots=a_{i_{k-1} i_k} = a_{i_k i_1} = 1$. Equivalently, $\madj^n = \mathbf{0}$. Transitivity: if $a_{ij}=a_{jk}=1$, then $a_{ik}=1$. 
  Equivalently, $\madj \geq \madj^2$.
  }
\end{assumption}
It is natural to assume that the network is acyclic. If firm $j$ takes $p_i$ as given, then it simply cannot be that firm $i$ takes its decision $p_j$ as given. Similarly for cycles of more than two players. Note the assumption allows firms $i$ and $j$ to make independent choices when $a_{ij}=a_{ji}=0$ and the network does not have to be connected.

The transitivity requires that if $i$ influences $j$ and $j$ influences $k$, then $i$ also influences $k$ directly, i.e.\ $k$ takes both $p_j$ and $p_i$ as given. In the examples above this was a natural assumption. Relaxing transitivity assumption would add the possibility of signaling to the game. For example, suppose that in the network described by \cref{F:scn_unordered} there is no edge from $R$ to $F$. Then firm $F$ knows that $p_R$ will be added to the price, but does not know the value. However, since $F$ knows $p_D$ and $D$ knows $p_R$, the price $p_D$ may reveal some information about $p_R$. Transitivity assumption excludes such signaling possibilities and thus simplifies the analysis significantly.\footnote{Compared to typical signaling models, the private information here is about the choices of other players (deviations in particular) rather than some underlying uncertainty.}

The second regularity assumption puts standard restrictions on the demand function. The demand function $D(P)$ is a smooth and strictly decreasing function. It either has a finite saturation point $\oP$ at which the demand is zero or converges to zero fast enough so that profit maximization problem is well-defined.

\begin{assumption} \label{A:ass_demand}
  Demand function $D : [0,\oP) \to \R_+$ is continuously differentiable and strictly decreasing in $[0,\oP)$ where $\oP \in \R_+$ could be finite or infinite.
  Moreover, it satisfies limit condition $\lim_{P \to \infty} P D(P) = 0$.
\end{assumption}

The third and final regularity assumption ensures that the demand function $D(P)$ is well-behaved so that the optimum of each firm can be found using the first-order condition. 
It is typical in the literature to make a regularity assumption that $D$ is twice differentiable and profits single-peaked. In particular, in theoretical works the demand is often assumed to be linear for tractability. However, in empirical literature logit demand is more common. Here I make an assumption about the demand function that would be analogous to the standard regularity assumption and covers both linear and logit demand functions.

Let the depth of the network $d(\madj)$ be the length of the longest path in $\madj$.\footnote{Formally, $d(\madj)$ is smallest $d$ is such that $\madj^{d} = \mathbf{0}$.} For example, in \cref{F:scn_unordered} depth $d(\madj) = 3$ (from the path $R \to D \to F$). Moreover, let me define a function
\begin{equation} \label{E:gdef}
  g(P) = - \frac{D(P)}{D'(P)},
\end{equation}
which is a convenient alternative way to represent the demand function. Note that $g(P) = \frac{P}{\varepsilon(P)}$, where $\varepsilon(P) = - \dd{D(P)}{P} \frac{P}{D(P)}$ is the demand elasticity. 
Then I make the following assumption about the shape of the demand function.
\begin{assumption} \label{A:ass_monotone}
  $g(P)$ is strictly decreasing and $d(\madj)$-times monotone in $P \in (0,\oP)$, i.e.\ for all $k=1,\dots,d(\madj)$, derivative $\dd[k]{g(P)}{P}$ exists and $(-1)^k \dd[k]{g(P)}{P} \geq 0$ for all $P \in (0,\oP)$.
\end{assumption}
To interpret the condition, let us look at the standard monopoly pricing problem $\max_P \pi(P) = \max_P (P-C) D(P)$. Then the first-order necessary condition for optimality of $P^*$ is
\begin{equation}
  \pi'(P^*) 
  = D(P^*) + (P^*-C) D'(P^*) = 0
  \;\;\Rightarrow\;\;
  P^* - C 
  = g(P^*),
\end{equation}
which illustrates the convenience of the $g(P)$ notation. Moreover, a sufficient condition for optimality is $\pi''(P^*) \leq 0$ or equivalently
$2 [D'(P^*)]^2 \geq D(P^*) D''(P^*)$. Note that a sufficient condition for this is $[D'(P^*)]^2 \geq D(P^*) D''(P^*)$, which is equivalent to $-g'(P^*) \geq 0$.
Therefore in the standard monopoly problem, monotonicity of $g(P)$ guarantees that monopoly profit has a unique maximum that can be found using the first-order approach. For general networks, the condition is stronger, it also guarantees that best-responses and best-responses to best-responses are well-behaved, so that the first-order approach is valid.

As illustrated by the monopoly example, the condition is sufficient and not necessary, but it is easy to check and it is satisfied for many applications. The following proposition provides a formal statement, by showing that with many typical functional form assumptions on $D(P)$, the function $g(P)$ is completely monotone, i.e.\ $d$-times monotone for arbitrarily large $d \in \N$. Therefore \cref{A:ass_monotone} is satisfied with all networks.

\begin{proposition}[Many demand functions imply completely monotone $g(P)$] 
  \label{E:completelymonotonefunctions}
  Each of the following demand functions implies $d$-times monotone $g(P)$ for any $d \in \N$:
  \begin{enumerate}
    \item Linear demand $D(P)= a-bP$ with $a,b>0$ $\Rightarrow g(P) = \oP - P$, where $\oP = \frac{a}{b}>0$.
    \item Power demand $D(P)=d \sqrt[\beta]{a-bP}$ with $d,\beta,a,b>0$ $\Rightarrow g(P) = \beta (\oP-P)$.
    \item Logit demand $D(P)=d \frac{e^{-\alpha P}}{1+e^{-\alpha P}}$ with $d,\alpha>0$ $\Rightarrow g(P) = \frac{1}{\alpha} \left[ 1+e^{-\alpha P} \right]$.
    \item Exponential demand $D(P) = a-b e^{\alpha P}$ with $a > b >0, \alpha>0 \Rightarrow g(P) = \frac{1}{\alpha} \left[
      \oP e^{-\alpha P} - 1
    \right]$.
  \end{enumerate}
\end{proposition}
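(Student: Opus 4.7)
The plan is to verify the proposition by direct computation on each of the four demand families. In each case I would first substitute $D$ and $D'$ into the definition $g(P) = -D(P)/D'(P)$ to derive the closed-form expression claimed in the statement, and then read off complete monotonicity by inspection of the derivatives. The unifying observation is that in each case $g(P)$ turns out to be either affine in $P$ (cases 1 and 2) or affine in $e^{-\alpha P}$ (cases 3 and 4); both of these families are manifestly completely monotone after accounting for sign, so the hard work is really just checking the algebra.

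For linear demand, $D'(P) = -b$, so $g(P) = (a-bP)/b = \oP - P$ with $\oP = a/b$; then $g'(P) = -1$ and $g^{(k)}(P) = 0$ for $k \geq 2$, giving $(-1)^k g^{(k)}(P) \geq 0$ trivially for every $k \geq 1$. For power demand, the chain rule yields $D'(P) = -(db/\beta)(a-bP)^{1/\beta - 1}$, and the ratio collapses to $g(P) = \beta(\oP - P)$, which is again affine with negative slope, so the same argument applies. In both sub-cases $g$ is strictly decreasing as required.

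The logit and exponential cases are where a little care is needed. For logit, applying the quotient rule to $D(P) = d\, e^{-\alpha P}/(1 + e^{-\alpha P})$ produces the cancellation
\begin{equation*}
D'(P) = \frac{-\alpha d\, e^{-\alpha P}(1+e^{-\alpha P}) + \alpha d\, e^{-2\alpha P}}{(1+e^{-\alpha P})^2} = \frac{-\alpha d\, e^{-\alpha P}}{(1+e^{-\alpha P})^2},
\end{equation*}
after which the numerator and one factor of $(1+e^{-\alpha P})$ cancel against $D(P)$ to give $g(P) = (1+e^{-\alpha P})/\alpha$. Differentiating $k$ times yields $g^{(k)}(P) = (-1)^k \alpha^{k-1} e^{-\alpha P}$, so $(-1)^k g^{(k)}(P) = \alpha^{k-1} e^{-\alpha P} > 0$ for every $k \geq 1$. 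For exponential demand $D'(P) = -b\alpha e^{\alpha P}$, so $g(P) = (1/\alpha)\bigl[(a/b) e^{-\alpha P} - 1\bigr] = (1/\alpha)[\oP e^{-\alpha P} - 1]$ with $\oP = a/b$, and $g^{(k)}(P) = (-1)^k (\oP/\alpha) \alpha^{k} e^{-\alpha P}$ (up to the $-1$ constant which vanishes after one derivative), again giving the required sign pattern.

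There is no real obstacle; the only mildly delicate step is the logit computation, where the cancellation that collapses $g$ to a purely exponential expression has to be carried out carefully. Once that is in place, strict monotonicity and complete monotonicity in all four cases follow from the same template: an affine function (of $P$ or of $e^{-\alpha P}$) has derivatives whose signs alternate in the required way.
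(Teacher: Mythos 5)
Your proposal is correct and follows essentially the same route as the paper's proof: compute $g(P)=-D(P)/D'(P)$ in closed form for each family and read off the alternating sign of the derivatives (affine in $P$ for linear/power, affine in $e^{-\alpha P}$ for logit/exponential). The only cosmetic difference is that the paper treats linear demand as the special case $\beta=d=1$ of power demand rather than computing it separately.
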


Note that for all four functions \cref{A:ass_demand} is clearly also satisfied. Linear and power demand functions have saturation point $\oP$, logit demand satisfies $
\lim_{P \to \infty} P D(P) 
= d \lim_{P \to \infty} \frac{1}{\alpha e^{\alpha P}} = 0$, and exponential demand has saturation point $\oP = \frac{1}{\alpha} \log \frac{a}{b}$.

\section{Characterization} \label{S:characterization}

In this section, I first discuss two examples that illustrate the complications that arise from the non-linearities and the network structure. I use these examples to illustrate the techniques that I use for equilibrium characterization. The main result of the paper is the characterization theorem, that formalizes the approach.

\subsection{Example: Non-linear Demand}

The first example illustrates the complexities that arise from working with non-linear demand function, when the decisions are sequential. Let us consider logit demand $D(P) = \frac{e^{-P}}{1+e^{-P}}$, costless production, and two monopolists, who choose their prices sequentially, as indicated in \cref{F:2seq}. That is, first firm $1$ chooses price $p_1$, which is taken into account by firm $2$, when it chooses price $p_2$. The price of the final good is $P=p_1+p_2$.
\begin{figure}[!ht] 
\begin{minipage}{0.5\textwidth}
  \centering
  \scalebox{0.85}{
  \begin{tikzpicture}
  [
  cnode/.style={draw=black,fill=#1,minimum width=3mm,circle},
  ]
  \node[cnode=white] (nF)   at (4,-2)   {$1$};
  \node[cnode=white] (nD)   at (6,-2)   {$2$};
  \draw[<-] (nD) -- (nF);
  \end{tikzpicture}  
  }
\end{minipage}
\begin{minipage}{0.5\textwidth}
  \[
  \madj = 
  \kbordermatrix{ 
      & 1 & 2 \\
    1 & 0 & 1 \\ 
    2 & 0 & 0 \\ 
  }
  \]
\end{minipage}  
  \caption{Example: two sequential monopolists} \label{F:2seq}
\end{figure}

The standard method of finding the equilibrium in this game is backward-induction It starts by finding the best-response function of firm $2$, by solving
$\max_{p_2} p_2 D(p_1+p_2)$. The optimality condition is
\begin{equation} \label{E:compl1foc2}
  \dd{\pi_2}{p_2}
  =
  D(p_1+p_2)
  + p_2 D'(p_1+p_2)
  = 0
\;\;\iff\;\;
  e^{p_2} (1-p_2) = e^{-p_1}.
\end{equation}
Solving this, gives best-response function $p_2^*(p_1) = 1 + W(e^{-(p_1+1)})$, where $W(\cdot)$ is the Lambert W function\footnote{Function $W(x)$ is defined as a solution to $x = W(x) e^{W(x)}$.}. Substituting $p_2^*(p_1)$ to the optimization problem of firm $1$ gives
\begin{equation} \label{E:compl1example}
  \max_{p_1} p_1 D(p_1+p_2^*(p_1))
  \Rightarrow
  1 + e^{-(p_1+1)-W(e^{-(p_1+1)})} = p_1 \left( 1+\frac{-e^{-(p_1+1)} W(e^{-(p_1+1)})}{e^{-(p_1+1)}+W(e^{-(p_1+1)})} \right).
\end{equation}
Solving it numerically gives $p_1^* \approx 1.2088$, therefore $p_2^* \approx 1.0994$ and $P^* \approx 2.3082$.
However, there is no analytic solution to optimality condition \eqref{E:compl1example}. This implies that the standard approach fails when the network is more complex than the one studied here. The backward-induction cannot be used, since computing best-response functions and substituting them to the maximization problems of other firms is not feasible. The issue is tractability---since the optimality conditions are non-linear, solving them leads to complex expressions. Replacing best-responses sequentially amplifies these complexities.
 
The solution to this problem comes from \cite{hinnosaar_optimal_2018}, which proposes characterizing the behavior of the following players by inverted best-response functions. The key observation is that although \cref{E:compl1foc2} is a highly non-linear function of $p_1$ and $p_2$ separately, fixing $P=p_1+p_2$ leads to a linear equation for $p_2$ or equivalently $p_1 = P-p_2$. 
Therefore, for a fixed price of the final good, $P$, it is straightforward to find the price if firm 2 that is consistent with the final good price $P$. 
I denote this by $f_2(P)$, i.e.
\[
  f_2(P) = p_2 = -\frac{D(P)}{D'(P)} = g(P) 
\]
where $g(P) = 1+e^{-P}$. 
Firm $1$ knows that if it sets its price to $p_1$, the price of the final good will satisfy $P = p_1 + f_2(P)$. Therefore we can think of firm 1's problem as choosing $P$ to solve
\[
  \max_{P} [P-f_2(P)] D(P)
  \;\;
  \Rightarrow
  \;\;
  f_1(P) = p_1 
  = P-f_2(P) 
  = g(P) [1-f_2'(P)].
\]
Therefore if the final good's price in equilibrium is $P^*$, then optimal behavior of both players requires that $P^* = f_1(P^*) + f_2(P^*) = 2 g(P^*) - g'(P^*) g(P^*)$. This equation is straightforward to solve and the same argument can be easily extended to more players choosing sequentially.

There is one more pattern in the particular expressions we get that the characterization will exploit. Namely, the condition for equilibrium is
\[
  P^* 
  = 2 g(P^*) - g'(P^*) g(P^*)
  = 2 g_1(P^*) + g_2(P^*),
\]
where $g_1(P^*)=g(P^*)$ and $g_2(P^*) = -g_1'(P^*) g(P^*)$.
The expression on the right-hand side consists of two elements. The first, $2 g_1(P^*)$ captures the fact that there are two players who each individually maximize their profits. The second $g_2(P^*)$ captures the fact that player $1$ influences player $2$. It is straightforward to verify that for example if we would remove this influence, i.e.\ with two monopolists choosing their prices simultaneously, the equilibrium condition would become $P^* = 2 g_1(P^*)$.

The main advantage of this approach is tractability. Instead of solving non-linear equations in each step and inserting the resulting expressions to the next maximization problems, which results in more complex non-linear expressions, this approach allows combining all necessary conditions of optimality into one necessary condition. Under \cref{A:ass_demand,A:ass_monotone}, the resulting expression has a unique solution, which gives us a unique candidate for an interior equilibrium. Under the same assumptions, the sufficient conditions for optimality are also satisfied and therefore it determines unique equilibrium.

\subsection{Example: Interconnected Decisions} \label{SS:interconnected}

The second example illustrates a new issue that arises in the case of networks---the decisions are interconnected. For example, on the network depicted by \cref{F:scn_unordered}, firms $L$ and $D$ make independent decisions, but due to their positions, they have different views on what happens before and after them. Firm $D$ influences only $F$, but $L$ influences $T$ and $C$ as well. Similarly, $D$ takes $p_R$ as given, whereas $L$ does not observe $p_R$ and therefore has to have an equilibrium conjecture about the optimal behavior of $R$. Therefore solving the game sequentially is not possible anymore.

To illustrate this issue, consider the simple network illustrated by \cref{F:ex_nonnested}. Let us assume that the demand is linear $D(P)=1-P$, and there are no costs and no price-takers.
\begin{figure}[!ht]
\begin{minipage}{0.5\textwidth}
  \centering   
  \scalebox{0.85}{
  \begin{tikzpicture}
  [
  cnode/.style={draw=black,fill=#1,minimum width=3mm,circle},
  ]
  \node[cnode=white] (n1) at (2,0) {1};
  \node[cnode=white] (n2) at (2,-2) {2};
  \node[cnode=white] (n3) at (4,0) {3};
  \node[cnode=white] (n4) at (4,-2) {4};
  \draw[->] (n1) -- (n3);
  \draw[->] (n1) -- (n4);
  \draw[->] (n2) -- (n4);
  \end{tikzpicture}%
  }
\end{minipage}
\begin{minipage}{0.5\textwidth}
  \[
  \madj = 
  \kbordermatrix{ 
      & 1 & 2 & 3 & 4 \\
    1 & 0 & 0 & 1 & 1 \\ 
    2 & 0 & 0 & 0 & 1 \\ 
    3 & 0 & 0 & 0 & 0 \\ 
    4 & 0 & 0 & 0 & 0 \\ 
  }
  \]
\end{minipage}    
  \caption{Example: network with interconnected decisions}\label{F:ex_nonnested}
\end{figure}

The strategies of the firms are respectively $p_1^*$, $p_2^*$, $p_3^*(p_1)$, and $p_4^*(p_1,p_2)$. Let us first consider the problem of player $4$, who observes $p_1$ and $p_2$ and expects equilibrium behavior from player $3$. Therefore player $4$ solves
\[
\max_{p_4 \geq 0} p_4 \left[1-p_1-p_2-p_3^*(p_1)-p_4\right]
\;\;\;
\Rightarrow
\;\;\;
p_4^*(p_1,p_2) = \frac{1}{2} \left[1-p_1-p_2-p_3^*(p_1) \right].
\]
While this condition provides a condition for the best-response function $p_4^*(p_1,p_2)$, we have not yet characterized it, as it would require knowing $p_3^*(p_1)$. Player $3$ solves a similar problem, but does not observe $p_2$ and expects $p_4$ to be $p_4^*(p_1,p_2^*)$
\[
\max_{p_3 \geq 0} p_3 \left[1-p_1-p_2^*-p_3-p_4^*(p_1,p_2^*)\right]
\;\;\;
\Rightarrow
\;\;\;
p_3^*(p_1) = \frac{1}{2} \left[1-p_1-p_2^*-p_4^*(p_1,p_2^*(p_1)) \right].
\]
Again, computing this best-response function explicitly, requires knowing $p_4^*(p_1,p_2)$, but also the equilibrium price of player $2$, i.e.\ $p_2^*$. 
To compute the best-response functions explicitly (i.e.\ independently of each other), we first need to solve the equation system that we get by inserting $p_2^*$ to the optimality condition of player $4$. This gives us
\[
p_3^*(p_1) = p_4^*(p_1,p_2^*) = \frac{1}{3} \left[1-p_1-p_2^*\right]
\;\;\Rightarrow \;\;
p_4^*(p_1,p_2)
= \frac{1}{3}\left[1-p_1\right]+\frac{1}{6 }p_2^*-\frac{1}{2} p_2.
\]
Note the prices $p_3$ and $p_4$ we have now characterized are still not the true best-response functions, since they depend on the equilibrium price $p_2^*$, which is yet to be determined. For this we need to solve the problem of player $2$, who expects player $1$ to choose equilibrium price $p_1^*$
\[
\max_{p_2 \geq 0} p_2 \left[1-p_1^*-p_2-p_3^*(p_1^*)-p_4^*(p_1^*,p_2) \right].
\]
Taking the first-order condition and evaluating it at $p_2=p_2^*$ gives a condition
\begin{equation} \label{E:ex_nonnested_foc2}
\frac{1}{6}\left[ 2-2 p_1^*-5 p_2^* \right] = 0.
\end{equation}
Finally, player 1 solves a similar problem, taking $p_2^*$ as fixed, i.e.
\[
\max_{p_1 \geq 0} p_1 \left[1-p_1-p_2^*-p_3^*(p_1)-p_4^*(p_1,p_2^*) \right].
\]
Again, taking the first-order condition and evaluating it at $p_1=p_1^*$ gives
\begin{equation} \label{E:ex_nonnested_foc1}
\frac{1}{3}\left[1-2 p_1^*-p_2^*\right] = 0.
\end{equation}
Solving the equation system \cref{E:ex_nonnested_foc1,E:ex_nonnested_foc2} gives us $p_1^*= \frac{3}{8}$, $p_2^* = \frac{1}{4}$. Inserting these values to the functions derived above gives us the best-response functions $p_3^*(p_1) = \frac{1}{4}-\frac{1}{3}p_1$ and $p_4^*(p_1,p_2)=\frac{3}{8}-\frac{1}{3}p_1-\frac{1}{2}p_2$. We can also compute the equilibrium prices $p_3^*(p_1^*) = p_4^*(p_1^*,p_2^*) = \frac{1}{8}$. Therefore equilibrium price of the final good is $P^* = \frac{7}{8}$.

As the example illustrates, finding the equilibrium strategies requires solving a combination of equation systems in parallel with finding the best-response functions. Each additional edge in the network can create a new layer of complexity.

The inverted best-response approach solves this issue as follows. Consider the optimization problem of firm $4$. For given $(p_1,p_2)$, it chooses optimal $p_4$. We can rethink its optimization problem as choosing the final good price $P = p_1 + p_2 + p_3^*(p_1) + p_4$ that it wants to induce. 
We can rewrite its maximization problem as
\[
  \max_{P} [P-p_1-p_2-p_3^*(p_1)] D(P)
  \;\;\Rightarrow\;\;
  D(P)
  +
  [P-p_1-p_2-p_3^*(p_1)] D'(P) 
  = 0,
\]
which can be rewritten as $P-p_1-p_2-p_3^*(p_1) = -\frac{D(P)}{D'(P)} = g(P) = 1-P$. 
This is a necessary condition for optimality, but since the problem is quadratic, it is easy to see that it is also sufficient. This expression gives implicitly the best-response function $p_4^*(p_1,p_2)$. But more directly, the expression in the left-hand-side is the optimal $p_4$ that is consistent with the final good price $P$ and the optimal behavior of firm $4$. Let us denote it by $f_4(P) = 1-P$. The problem for the firm $3$ is analogous and gives $f_3(P)=1-P$.

Now, consider firm $2$. Instead of choosing $p_2$ it can again consider the choice of the final good price $P$. Since only firm $4$ observes its choice (and thus chooses $p_4=f_4(P)$ as a response to desired $P$), the firm $2$'s problem can be written as
\[
  \max_{P} [P-p_1^*-p_3^*(p_1^*)-f_4(P)] D(P)
  \;\;\Rightarrow\;\;
  [1-f_4'(P)] D(P)
  + [P-p_1^*-p_3^*(p_1^*)-f_4(P)] D'(P)
  =0,
\]
which gives us a condition $f_2(P)=P-p_1^*-p_3^*(p_1^*)-f_4(P) = 2 (1-P)$. Analogous calculation for firm $1$ gives $f_1(P) = 3(1-P)$.
Now, the equilibrium price $P^*$ of the final good must be consistent with individual choices. Therefore we get a condition
\[
  P^* = \sum_{i=1}^4 f_i(P^*) = 7(1-P^*).
\]
Solving this equation gives us $P^* = \frac{7}{8}$ and individual prices $p_1^* = 3(1-P^*) = \frac{3}{8}$, $p_2^* = \frac{2}{8}$, and $p_3^* = p_4^* = \frac{1}{8}$.

Notice that the same calculations could be applied easily for non-linear demand functions, with some $g(P) = -\frac{D(P)}{D(P)}$. This would give us an equilibrium condition
\[
  P^* 
  = \sum_{i=1}^4 f_i(P^*) 
  = 4 g_1(P^*) + 3 g_2(P^*),
\]
where $g_1(P)=g(P)$ and $g_2(P)=-g_1'(P) g(P)$. This is again the same pattern that we saw in the previous example, since the number of players is $4$ and the number of edges is $3$. In the case of linear demand, $g_1(P) = g(P) = 1-P$ and therefore $g_2(P) = 1-P$.

This example illustrates the advantage of the inverted best-response approach. As the approach combines all necessary conditions into one, the issues of interconnected decisions are automatically mitigated.

\subsection{Characterization}

As illustrated by the examples above, it is useful to define functions $g_1,\dots,g_n$, which capture relevant properties of the demand function. They are defined recursively as
\begin{equation} \label{E:gidef}
g_1(P) = g(P) = -\frac{D(P)}{D'(P)}
\text{ and }
g_{k+1}(P) = -g_k'(P) g(P).
\end{equation}
As the discussion about monopoly profit maximization and the examples illustrated, $g_1(P)$ captures the standard concavity of the profit function, whereas $g_2(P)$ captures the direct discouragement effect when a firm observes the price of another firm. Functions $g_3,\dots,g_n$ play a similar role in describing higher-order discouragement effects.

Note also that the adjacency matrix $\madj$ provides a convenient way to keep track of the number of direct and indirect influences. Multiplying adjacency matrix with a column vector of ones, $\madj \vone$, gives a vector with the number of edges going out from each player (i.e.\ the sum over columns). Similarly, $\vone' \madj \vone$ is the total number of edges on the network, i.e.\ the total number of direct influences. Multiplying the adjacency matrix by itself, i.e.\ $\madj^2 = \madj \madj$ gives a matrix that describes two-edge paths, i.e.\ element $a_{i,j}^2$ is the number of paths from $i$ to $j$ with one intermediate step. Similarly $\madj^k$ is the matrix that describes number of all $k$-step paths from each $i$ to each $j$. When we take $k=0$, then $\madj^0$ is an identity matrix, which can be interpreted as $0$-step paths (clearly the only player that can be reached from player $i$ by following $0$ edges is player $i$ himself). 

Therefore $\madj^k \vone$ is a vector whose elements are the numbers of $k$-step paths from player $i$, which can be directly computed as $\ve_i' \madj^k \vone$, where $\ve_i$ is a column vector, where $i$th element is $1$ and other elements are zeros. Similarly, $\vone' \madj^k \vone$ is the number of all $k$-step paths in the network. 
The following expression makes these calculations for the network described by \cref{F:scn_unordered}, which has six players, six edges, and one two-edge path ($R \to D \to F$).
\begin{equation} \label{E:scn_unordered}
\kbordermatrix{
  & \madj^0 \vone & \madj^1 \vone & \madj^2 \vone & \madj^3 \vone \\
  L & 1 & 3 & 0 & 0 \\ 
  T & 1 & 0 & 0 & 0 \\ 
  F & 1 & 0 & 0 & 0 \\ 
  C & 1 & 0 & 0 & 0 \\ 
  D & 1 & 1 & 0 & 0 \\ 
  R & 1 & 2 & 1 & 0 \\
  \vone' \madj^{k-1} \vone & 6 & 6 & 1 & 0 \\
}.
\end{equation}
%

With this notation, I can now state the main result of this paper, the characterization theorem that states that there exists a unique equilibrium and shows how it is characterized using the components we have discussed.

\begin{theorem}\label{T:characterization}
  There is a unique equilibrium, the final good price $P^*$ is the solution to 
  \begin{equation} \label{E:equilibrium}
    P^* - C  
    = \sum_{k=1}^n \vone' \madj^{k-1} \vone g_k(P^*)
    ,
  \end{equation}
  and the individual prices are $p_i^* = c_i 
  + \sum_{k=1}^n \ve_i' \madj^{k-1} \vone g_k(P^*)
  $ for all $i$.  
\end{theorem}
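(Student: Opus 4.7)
The plan is to formalize the inverted best-response technique illustrated in both examples. By \cref{A:ass_network}, the relation $\madj$ is a strict partial order; I fix a topological order of the monopolists $1,\dots,n$ compatible with $\madj$, so that $j \in S_i := \{\ell: a_{i\ell}=1\}$ always comes after $i$. For each firm $i$, I will define $f_i(P)$ as the markup $p_i - c_i$ consistent with the induced final-good price $P$ and with equilibrium play by every firm $i$ influences, directly or indirectly. Mirroring the derivation in \cref{SS:interconnected}, I reparametrize $i$'s problem in terms of the target $P$: substituting $p_j = c_j + f_j(P)$ for $j \in S_i$ and treating the markups of firms outside $S_i \cup \{i\}$ as their (not-yet-identified) equilibrium constants, firm $i$'s first-order condition collapses to
\begin{equation*}
  f_i(P) \;=\; g(P)\left[1 - \sum_{j \in S_i} f_j'(P)\right].
\end{equation*}

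\textbf{Closed form and aggregation.} I would then solve this recursion by induction in reverse topological order. Writing $N_i^{(k)} := \ve_i' \madj^{k-1} \vone$ for the count of $(k-1)$-step paths starting at $i$, I conjecture $f_i(P) = \sum_{k=1}^{n} N_i^{(k)} g_k(P)$. The base case is any sink $i$ with $S_i = \emptyset$: the FOC reduces to $f_i(P) = g(P) = g_1(P)$, matching $N_i^{(1)}=1$ and $N_i^{(k)}=0$ for $k\geq 2$. For the inductive step, I plug the conjectured $f_j$ for $j \in S_i$ into the FOC, differentiate term by term, and combine the identity $g_{k+1}(P) = -g_k'(P) g(P)$ from \eqref{E:gidef} with the path-count recursion $N_i^{(k+1)} = \sum_{j \in S_i} N_j^{(k)}$. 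This gives $p_i^* = c_i + \sum_k N_i^{(k)} g_k(P^*)$, and summing over $i$ while using $\sum_i N_i^{(k)} = \vone' \madj^{k-1} \vone$ and $C = c_0 + \sum_i c_i$ delivers \eqref{E:equilibrium}.

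\textbf{Existence and uniqueness of $P^*$.} The aggregated equation has a unique solution in $(0, \oP)$. By \cref{A:ass_monotone}, $g_1$ is positive and strictly decreasing and every $g_k$ with $k \geq 2$ is non-negative and non-increasing (since $g_k = -g_{k-1}' g$ and the signs of the $g_{k-1}^{(\cdot)}$ alternate by $d(\madj)$-monotonicity). Hence the right-hand side of \eqref{E:equilibrium} is strictly decreasing in $P$ while the left-hand side $P - C$ is strictly increasing; the boundary behavior from \cref{A:ass_demand} pins the crossing inside $(0,\oP)$.

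\textbf{Main obstacle: sufficiency and subgame-perfection.} The core technical hurdle will be promoting the FOC characterization to a genuine subgame-perfect equilibrium. This requires two things. First, I must verify the second-order condition for each firm along the path of play, where firm $i$'s reduced objective $[P - (\text{inherited prices}) - \sum_{j \in S_i} f_j(P) - c_i]\,D(P)$ is being differentiated through up to $d(\madj)$ layers of nested best-responses; the $d(\madj)$-times monotonicity in \cref{A:ass_monotone} is exactly the condition that keeps the signs of all induced derivatives under control, so I plan to prove quasi-concavity by induction on network depth in tandem with the induction in Paragraph~2. Second, I must handle off-path deviations by firms that move ``in parallel'' (incomparable in $\madj$); here transitivity is the crucial lever, because it guarantees that any deviation $p_i \neq p_i^*$ is observed by exactly the firms in $S_i$ and by no others, eliminating signaling and reducing the one-shot deviation check at every information set to the same FOC whose sufficiency was established above.
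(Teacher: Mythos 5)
Your proposal follows essentially the same route as the paper's own proof: inverting best responses in terms of the induced final-good price, solving the resulting FOC recursion by induction from the sinks upward using $g_{k+1}=-g_k'g$ and the path-count recursion $\ve_i'\madj^{k}\vone=\sum_{j\in S_i}\ve_j'\madj^{k-1}\vone$, aggregating to \eqref{E:equilibrium}, and obtaining uniqueness and sufficiency from the monotonicity of the $g_k$ guaranteed by \cref{A:ass_monotone} (the paper packages the sufficiency step as strict monotonicity of the $f_i$ plus interiority rather than explicit quasi-concavity, but the substance is the same). No essential gap.
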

The proof in \cref{A:proofs} builds on the ideas discussed above. A few remarks are in order. The uniqueness is straightforward to establish. \Cref{A:ass_monotone} implies that each $g_k(P)$ is weakly decreasing (this is formally shown in \cref{L:gkmonotone} in \cref{A:proofs}). The right-hand side of \cref{E:equilibrium} is therefore decreasing, whereas the left-hand side is strictly increasing.
Connection to inverted best-response functions is also clear, as the individual prices are determined by $p_i^* = c_i + f_i(P^*)$.


\section{Multiple-marginalization Problem} \label{S:marginalization}

Let me first interpret the equilibrium condition \cref{E:equilibrium} by comparing it with the known benchmark cases.
First, when all firms are price-takers, then the network is empty and therefore the right-hand side of \eqref{E:equilibrium} is zero. As expected, the equilibrium condition is, therefore, $P^*=C$, i.e.\ price of the final good equals the marginal cost of the final good. Standard arguments imply that this is also the welfare-maximizing solution.

Second, suppose that there is a single monopolist, i.e.\ $n=1$ and $\madj = [0]$. Therefore there is a single element on the right-hand side of \eqref{E:equilibrium} with value $g_1(P^*)$. We can rewrite the condition as
\begin{equation}
  \frac{P^*-C}{P^*}
  = \frac{g_1(P^*)}{P^*}
  = \frac{1}{\varepsilon(P^*)},
\end{equation}
which is the standard inverse-elasticity rule: mark-up (Lerner index) equals the inverse elasticity. There is a usual monopoly distortion---as the monopolist does not internalize the impact on the consumer surplus, the price equilibrium price of the final good is higher and the equilibrium quantity lower than the social optimum.
This is also the joint profit-maximization outcome.

Third, consider $n > 1$ monopolists who are making their decisions simultaneously. That is, the network has $n$ nodes, but no edges. Analogously with a single monopolist, we can then rewrite the equilibrium condition as
\begin{equation}
  \frac{P^*-C}{P^*} 
  = \vone' \madj^0 \vone \frac{g_1(P^*)}{P^*}
  = \frac{n}{\varepsilon(P^*)} > \frac{1}{\varepsilon(P^*)}.
\end{equation}
The total markup is now strictly higher than in the case of a single monopolist. This is the standard \emph{multiple-marginalization problem}---firms do not internalize not only the impact on consumer surplus but also the impact on the other firms. Therefore the distortion is even larger than in the case of a single monopolist, which means that both the total profits and the social welfare are reduced compared to a single monopolist.

The novel case studied in this paper is with multiple monopolists and some influences. That is $n>0$ and $\madj \neq \vzero$. In this case, the condition can be written as
\begin{equation}
  \frac{P^*-C}{P^*}
  = \frac{n}{\varepsilon(P^*)} + \sum_{k=2}^n \vone' \madj^{k-1} \vone \frac{g_k(P^*)}{P^*} > \frac{n}{\varepsilon(P^*)}.
\end{equation}
The total markup and therefore the distortion is even higher than with $n$ independent monopolists. The intuition for this is simple: suppose that there is a single edge so that firm $i$ influences firm $j$. Then in addition to the trade-offs firm $i$ had before, raising the price now will reduce the profitability of firm $j$, who will respond by reducing its price. Therefore $p_i$ will be higher and $p_j$ lower than with simultaneous decisions. How about the price of the final good, which depends on the sum of $p_i$ and $p_j$? If the reduction in $p_j$ would be so large that the total price does not increase, then $p_i$ would not be optimal, since the profit of firm $i$ is $(p_i-c_i) D(P)$, i.e.\ increasing in $p_i$ and decreasing $P$, so $i$ would want to raise the price even further. Thus in equilibrium, it should be that the price of the final good is increased.
I formalize and generalize this observation as \cref{C:multimarg}. The corollary follows from \cref{E:equilibrium} and non-negativity of $g_k$ functions.

\begin{corollary}[Magnified Multiple-marginalization Problem] \label{C:multimarg}
  Suppose that there are two networks $\madj$ and $\madjother$ such that
  \begin{enumerate}
    \item $\vone' \madj^{k-1} \vone \geq \vone' \madjother^{k-1} \vone$ for all $k \in \{1,\dots,n\}$ and
    \item $\vone' \madj^{k-1} \vone > \vone' \madjother^{k-1} \vone$ for at least one $k$,
  \end{enumerate}
  then both social welfare and total profit in the case of $\madj$ is lower than with $\madjother$.
\end{corollary}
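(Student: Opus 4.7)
The plan is to reduce both comparisons to a single price comparison: show that the equilibrium final-good price is strictly higher under $\madj$ than under $\madjother$, and then observe that welfare and total profit are both monotone in this price around equilibrium. Let $P_A^*$ and $P_B^*$ denote the equilibrium final-good prices under $\madj$ and $\madjother$, each characterized as the unique solution of \eqref{E:equilibrium}.

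First I would establish $P_A^* > P_B^*$. By \Cref{A:ass_monotone} (together with \Cref{L:gkmonotone}), each $g_k$ is non-negative and weakly decreasing on $(0,\oP)$, so the right-hand side of \eqref{E:equilibrium} is a weakly decreasing function of $P$ whose value at any point is weakly larger for $\madj$ than for $\madjother$ by the coefficient hypothesis. Evaluating the $\madj$-equation at $P = P_B^*$ gives
\begin{equation*}
    P_B^* - C = \sum_{k=1}^n \vone' \madjother^{k-1} \vone \, g_k(P_B^*) \;\leq\; \sum_{k=1}^n \vone' \madj^{k-1} \vone \, g_k(P_B^*),
\end{equation*}
so the strictly increasing line $P - C$ lies weakly below the $\madj$-curve at $P_B^*$, forcing the unique crossing point $P_A^*$ to lie weakly to the right of $P_B^*$. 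Strict inequality follows because the coefficient gap at the distinguished index $k^*$ multiplies $g_{k^*}(P_B^*)$, which is strictly positive: $g_1 > 0$ on $[0,\oP)$ by construction, and the recursion $g_{k+1} = -g_k' g$ together with the strict-monotonicity side of \Cref{A:ass_monotone} propagates strict positivity to higher indices.

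Second, I would translate $P_A^* > P_B^*$ into the stated rankings. Total industry profit is $\Pi(P) = (P-C) D(P)$, single-peaked at the joint-monopoly price $P^M$ solving $P^M - C = g_1(P^M)$. Any equilibrium satisfies $P^* - C \geq g_1(P^*)$ because all remaining terms in \eqref{E:equilibrium} are non-negative, which, combined with the strict monotonicity of $(P-C) - g_1(P)$, forces $P^* \geq P^M$; hence $\Pi$ is strictly decreasing on $[P_B^*, P_A^*]$ and $\Pi(P_A^*) < \Pi(P_B^*)$. For social welfare $W(P) = \int_P^{\oP} D(s)\,ds + (P-C) D(P)$, direct differentiation gives $W'(P) = (P-C) D'(P) < 0$ on $(C,\oP)$, so $W(P_A^*) < W(P_B^*)$. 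The only delicate step I anticipate is the strict positivity of $g_{k^*}(P_B^*)$ when $k^* \geq 2$, which I would handle by arguing—either directly from the recursion or alongside \Cref{L:gkmonotone}—that strict monotonicity of $g$ propagates through the recursion on the relevant range; everything else is an immediate consequence of \Cref{T:characterization}.
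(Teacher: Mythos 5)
Your route is the paper's route: the paper justifies \cref{C:multimarg} with exactly one line---it ``follows from \cref{E:equilibrium} and non-negativity of $g_k$ functions''---and your write-up simply makes that explicit: compare the two fixed-point conditions at $P_B^*$ using the coefficient ranking and the fact that each $g_k\geq 0$ is weakly decreasing, conclude $P_A^*\geq P_B^*$ from strict monotonicity of $P-C-\sum_k \vone'\madj^{k-1}\vone\, g_k(P)$, and then push the price comparison through $\Pi(P)=(P-C)D(P)$ (strictly decreasing beyond the joint-monopoly price, and every equilibrium price satisfies $P^*-C\geq g_1(P^*)$, hence lies weakly beyond it) and $W'(P)=(P-C)D'(P)<0$. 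All of that is correct and, if anything, more careful than what the paper records.

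The one step that does not go through as you propose is the strictness repair. You claim that strict monotonicity of $g$ ``propagates strict positivity to higher indices'' via $g_{k+1}=-g_k'g$; it does not. The paper's own example $D(P)=d\,e^{\sqrt{2(a-bP)}/b}$ has $g$ strictly decreasing, yet $g_2(P)=b$ is constant and $g_k\equiv 0$ for all $k\geq 3$. So if the unique strict coefficient gap sits at such a $k^*$ (say the networks differ only in the number of two-edge and longer paths), the two equilibrium prices coincide and both welfare and total profit are equal, not strictly lower. Hence your argument delivers the strict conclusion only when $g_{k^*}(P_B^*)>0$ at some index $k^*$ where the coefficient inequality is strict (which holds automatically for $k^*=1,2$ under \cref{A:ass_monotone}, and for all $k$ under, e.g., power or logit demand); otherwise only the weak comparison survives---which is also all that the paper's stated justification (non-negativity of $g_k$) actually establishes. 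Flag that caveat rather than asserting the propagation claim, and the proof is complete and faithful to \cref{T:characterization}.
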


The result shows that multiple-marginalization problem is increased with strategic influences, but does not give a magnitude for it. To illustrate that the impact may be severe, let me give some numeric examples. First, suppose that demand is linear, $D(P)=1-P$, there are no costs, and there are no price-takers. Then standard calculations imply that the maximized total welfare would be $\frac{1}{2}$ and a single monopolist would choose price $\frac{1}{2}$, which would lead to dead-weight loss of $\frac{1}{8}$. Therefore with any network dead-weight loss is at least $\frac{1}{8}$ and at most $\frac{1}{2}$. 
\Cref{F:dwl_comp} illustrates the difference between the dead-weight loss in the best case (simultaneous decisions) and the worst case (sequential decisions).
Even in the best case (blue line with triangles) the multiple-marginalization problem can be severe and is increasing in $n$. However, the distortions with strategic interactions (red line with circles) are much higher for any $n$ and the dead-weight loss approaches to full destruction of the social welfare quickly. This comparison shows that strategic influences magnify the multiple-marginalization problem with any $n$.
\begin{figure}[ht!]
  \centering%
  \includegraphics[trim={0.22cm 0.45cm 0.86cm 0.35cm},clip,width=0.5\linewidth]{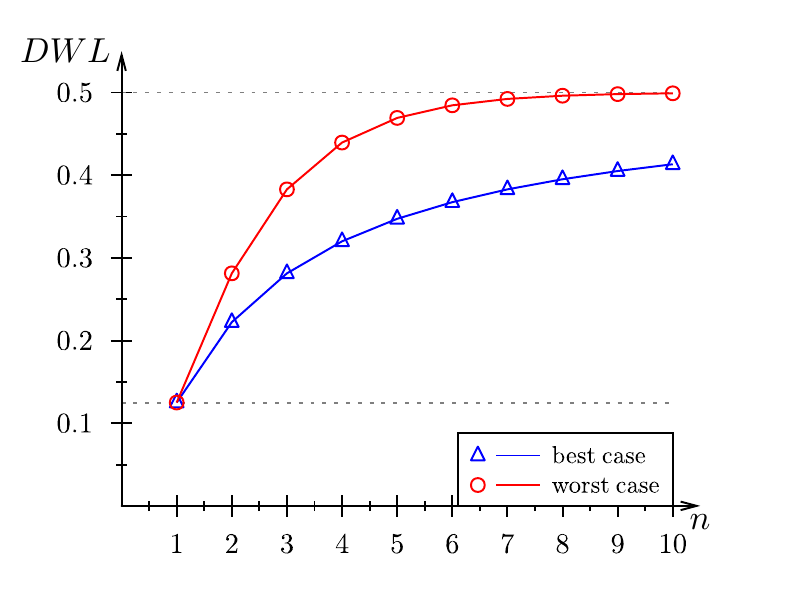}
  \caption{Example: comparison of dead-weight loss in the model with linear demand between the best case (simultaneous decisions) and the worst case (sequential decisions)} \label{F:dwl_comp}
\end{figure} 

How much the number of firms matters compared to strategic influences depends on the shape of the demand function. This is illustrated by \cref{F:dwl_comp_beta}, which provides the same comparison with a more general power demand function $D(P)=\sqrt[\beta]{1-P}$.\footnote{
Generally $DWL(P^*) = \int_0^{P^*} [D(P)-D(P^*)] dP$.  
The calculations and the impact of $\beta$ in case of power demand function are discussed in more detail in the next section.} When $\beta < 1$, then relatively high weight is given to more direct influences. Indeed, \cref{F:dwl_comp_beta01} shows that when $\beta=\frac{1}{10}$ then the best-case and the worst-case dead-weight loss do not differ much. On the other hand, when $\beta>1$ the weight is larger on more indirect influences. This is illustrated by \cref{F:dwl_comp_beta10}, where the difference between the two cases is large.
\begin{figure}[!ht]%
    \centering
    \subfloat[$\beta = \frac{1}{10}$]{\includegraphics[trim={0.22cm 0.35cm 0.86cm 0.35cm},clip,width=0.45\linewidth]{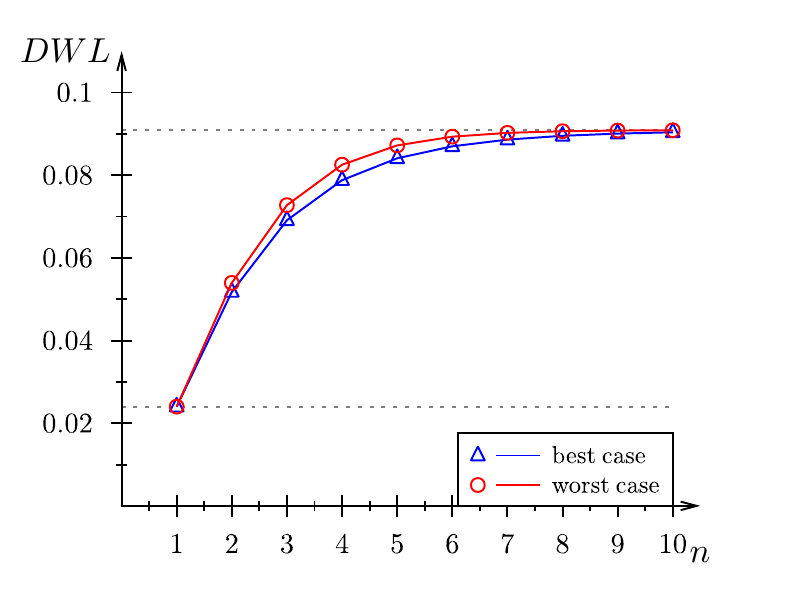} \label{F:dwl_comp_beta01}}
    \qquad
    \subfloat[$\beta = 10$]{\includegraphics[trim={0.22cm 0.35cm 0.86cm 0.35cm},clip,width=0.45\linewidth]{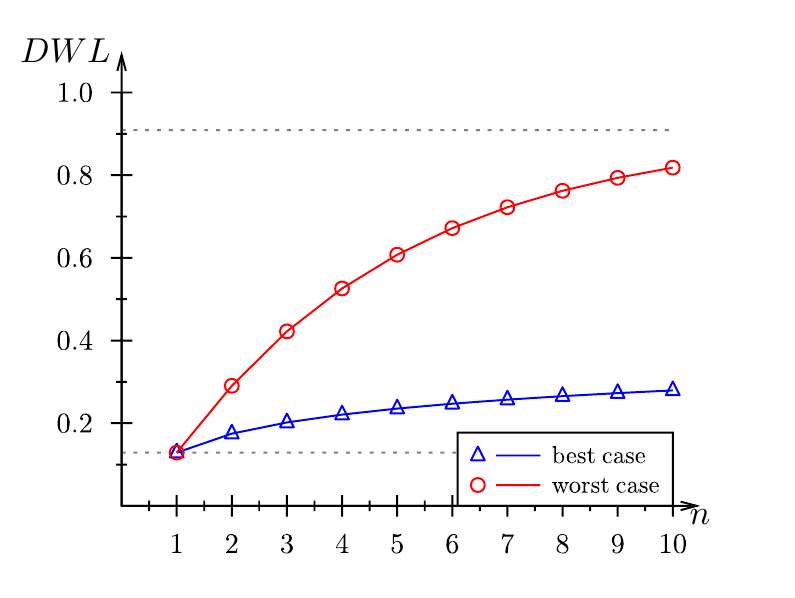} \label{F:dwl_comp_beta10}}
    \caption{Examples: comparison of dead-weight loss with power demand $D(P)=\sqrt[\beta]{1-P}$ between the best case (simultaneous decisions) and the worst case (sequential decisions) \label{F:dwl_comp_beta}}%
\end{figure}

\section{Influentiality} \label{S:influentiality}

\subsection{A Measure of Influentiality}

All monopolists on the network have some market power and therefore earn strictly positive profits. But some firms are more influential than others. Which ones and how does this depend on the network? 
The answer comes directly from the characterization in \cref{T:characterization}. For brevity, let me denote
\begin{equation} \label{E:influentialitydef}
  \infl_i(\madj) 
  = \sum_{k=1}^n \ve_i' \madj^{k-1} \vone g_k(P^*),
\end{equation}
which is a sum of scalars $\ve_i' \madj^{k-1} \vone$ weighted by $g_k(P^*)$. Note that $\ve_i' \madj^0 \vone = 1$, $\ve_i' \madj^1 \vone$ is the number of players $i$ influences, $\ve_i' \madj^2 \vone$ is the number of two-edge paths starting from $i$, and so on. Therefore $\infl_i(\madj)$ can be interpreted as a measure of influentiality of player $i$.

Fixing the equilibrium price of the final good $P^*$, the individual markups are $p_i^* - c_i = \infl_i(\madj)$ and therefore profits $\pi_i(\bp^*) = (p_i^*-c_i) D(P^*) = \infl_i(\madj) D(P^*)$.
Therefore $\infl_i(\madj)$ fully captures the details of the network that affect firm $i$'s action and payoff. \Cref{C:influentiality} provides a formal statement.

\begin{corollary}[$\infl_i(\madj)$ Summarizes Influences] \label{C:influentiality}
  $\infl_i(P^*) > \infl_j(P^*)$ if and only if $\pi_i(\bp^*) > \pi_j(\bp^*)$ and $p_i-c_i > p_j^*-c_j$.
\end{corollary}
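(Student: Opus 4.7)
The plan is to observe that this corollary follows almost immediately from the individual price characterization in \Cref{T:characterization}, which states $p_i^* = c_i + \sum_{k=1}^n \ve_i' \madj^{k-1} \vone g_k(P^*)$. Comparing with the definition \eqref{E:influentialitydef} of $\infl_i(\madj)$, this gives the identity $p_i^* - c_i = \infl_i(\madj)$. Hence the equivalence between ordering of markups and ordering of influentialities is immediate.

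For the profit comparison, I would substitute into the profit expression to obtain
\begin{equation*}
\pi_i(\bp^*) = (p_i^* - c_i) D(P^*) = \infl_i(\madj) D(P^*).
\end{equation*}
Since all firms face the same equilibrium demand $D(P^*)$, it suffices to show $D(P^*) > 0$ to conclude that $\pi_i(\bp^*) > \pi_j(\bp^*)$ if and only if $\infl_i(\madj) > \infl_j(\madj)$.

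The only (mild) obstacle is verifying that $D(P^*) > 0$, i.e.\ that $P^* < \oP$. This should follow from inspecting the fixed-point equation \eqref{E:equilibrium}: by \Cref{A:ass_demand}, $D(\oP) = 0$, which implies $g(\oP) = 0$ and hence $g_k(\oP) = 0$ for every $k \geq 1$ by the recursion \eqref{E:gidef}. So at $P = \oP$ the right-hand side of \eqref{E:equilibrium} vanishes while the left-hand side equals $\oP - C > 0$ (under the implicit assumption that production is viable, $C < \oP$). By continuity and monotonicity of both sides established in the discussion after \Cref{T:characterization}, the unique solution $P^*$ lies strictly below $\oP$, so $D(P^*) > 0$.

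With $D(P^*) > 0$ in hand, both biconditionals reduce to the same statement $\infl_i(\madj) > \infl_j(\madj)$, completing the argument. No additional machinery beyond \Cref{T:characterization} and the regularity assumptions is needed.
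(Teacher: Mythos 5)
Your proposal is correct and takes essentially the same route as the paper: the corollary is immediate from the identities $p_i^*-c_i=\infl_i(\madj)$ and $\pi_i(\bp^*)=\infl_i(\madj)D(P^*)$ given by \Cref{T:characterization}, with $D(P^*)>0$ following from the interiority of the equilibrium ($P^*\in(0,\oP)$) already established in the proof of that theorem. Your extra verification that $P^*<\oP$ via the fixed-point equation is harmless but not needed beyond what the paper's characterization proof provides.
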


This measure of influentiality $\infl_i(\madj)$ depends both on the network structure and the demand function. There are some cases when we can say more. In particular, if firm $i$ has more influences in all levels than firm $j$, i.e.
$\ve_i' \madj^{k-1} \vone \geq \ve_j' \madj^{k-1} \vone$ for all $k$ and the inequality is strict for at least one $k$, then $I_i(\madj) \geq I_j(\madj)$ regardless of the weights $g_k(P^*)$. The inequality is strict whenever $g_k(P^*)>0$ for $k$ such that $\ve_i' \madj^{k-1} \vone > \ve_j' \madj^{k-1} \vone$. 
For example, when firm $i$ influences firm $j$, then $\ve_i' \madj^{k-1} \vone \geq \ve_j' \madj^{k-1} \vone$ and the inequality is strict for at least $k=1$, so $I_i(\madj) > I_j(\madj)$ with any demand function.

\subsection{Connections with Network Centrality Measures}

The measure of influentiality defined above is reminiscent of the classic measures of centrality as the capture the same effects: a player is more influential if it influences either more players or more influential players. The difference is that while the classic centrality measures are defined purely using network characteristics, the influentiality measure defined here has endogenous weights that are determined by the model parameters such as the demand function, costs, and also by the price of the final good.

In some special cases, the connection is even closer. Consider the case of power demand $D(P) = d \sqrt[\beta]{a-bP}$. As discussed in \cref{SS:power}, it implies linear $g_k(P) = \beta^k (\oP - P)$. Therefore $\infl_i(\madj) = (\oP-P^*) \bonacich_i^{\beta}(\madj)$, where $\bonacich_i(\madj)^{\beta} = \sum_{k=1}^n \beta^k \ve_i' \madj^{k-1} \vone$ is the Bonacich centrality measure of player $i$. The general measure $I_i(\madj)$ can be thought as a generalization of Bonacich centrality where the weights are endogenously determined by the demand function and the equilibrium, rather than having exponential decay $\beta^k$.\footnote{Note that the standard definition of Bonacich centrality requires $\beta<1$, because otherwise the sum may not converge and thus the measure would not be well-defined. As the influence networks here are acyclic, any $\beta >0 $ is allowed. Values $\beta \geq 1$ arise whenever $D(P)=d \sqrt[\beta]{a-bP}$, which is a quite natural assumption for a demand function.}
Linear demand $D(P) = a-bP$ is a special case of power demand with $\beta=1$. Therefore the influentiality measure $I_i(\madj)$ simplifies to Bonacich centrality measure with $\beta=1$, i.e.\ equal weight for each level of influences.

However, the influence measure does not always have to have a flavor of Bonacich centrality. Let me provide two more examples to show this. First, suppose $D(P) = d e^{\sqrt{2(a-b P)}/b}$. This a specifically constructed demand function, which implies $g(P) = g_1(P) = \sqrt{2(a-b P)}$ and therefore $g_2(P) = b$, which means that $g_k(P) = 0$ for all $k>2$. With these weights the influentiality measure simplifies to $I_i(\madj) = \sqrt{2(a-b P^*)} + b \ve_i' \madj \vone$, i.e.\ depends only on the number of players directly influenced by player $i$. That is, the influentiality measure is a linear function of the degree centrality in this case.

For another example, consider logit demand $D(P) = d \frac{e^{-\alpha P}}{1+e^{-\alpha P}}$. As I will show in \cref{SS:logit}, it may lead to complex expressions, but when $n$ is large enough, then $g_1(P^*) \approx \frac{1}{\alpha}$ and $g_k(P^*) \approx 0$ for $k>1$. Therefore $\infl_i(\madj) \approx \frac{1}{\alpha}$. This means that in the case of logit demand with sufficiently many players, the network structure does not affect the pricing of the individual firms. The relevant centrality measure is approximately a constant.
These observations are summarized by \cref{T:connections_centrality_measures}.

\begin{table}[!ht]
    \centering
\begin{tabular}{l|l|l}
    Demand $D(P)$ & Influentiality $\infl_i(\madj)$ & Equivalent Network Centrality Measure \\
    \hline
    Power $d \sqrt[\beta]{a-b P}$  
    & $\left(\frac{a}{b}-P^* \right) \bonacich_i^{\beta}(\madj)$ 
    & $\bonacich_i^{\beta}(\madj)=$ Bonacich centrality (with $\beta$) \\ 
    Linear $a-b P$
    & $\left(\frac{a}{b}-P^* \right) \bonacich_i(\madj)$ 
    & $\bonacich_i(\madj)=$ Bonacich centrality with $\beta=1$ \\ 
    $d e^{\sqrt{2(a-b P)}/b}$
    & $\sqrt{2(a-b P^*)} + b D_i(\madj)$
    & $D_i(\madj) = \ve_i'\madj \vone =$ Degree centrality \\ 
    Logit $d \frac{e^{-\alpha P}}{1+e^{-\alpha P}}$
    & $\to \frac{1}{\alpha}$ & Approximately a constant
  \end{tabular}
    \caption{Examples of demand functions with which the measure of influentiality $I_i(\madj)$ simplifies to one of the standard network centrality measures}
    \label{T:connections_centrality_measures}
\end{table}

\section{Computing the Equilibrium} \label{S:computing}

In this subsection I show how the equilibrium characterization could be used to compute the equilibrium and study some of the most common demand functions where the characterization is even simpler.

\subsection{Linear Demand} \label{SS:linear}

Suppose that the demand function is linear $D(P)=a-bP$. Then $g(P) = -\frac{D(P)}{D'(P)} = \oP-P = g_1(P)$ with $\oP = \frac{a}{b}$ and therefore for all $k>1$, $g_{k+1}(P) = -g_k'(P) g(P) = \oP-P$. \Cref{E:equilibrium} simplifies to
\begin{equation} \label{E:equilibrium_linear}
  P^* - C 
  = \sum_{k=1}^n \vone' \madj^{k-1} \vone g_k(P^*)
  = (\oP-P^*) \bonacich(\madj),
\end{equation}
where $\bonacich(\madj) = \sum_{k=1}^n \vone' \madj^{k-1} \vone$ is the sum of the number of influences of all levels, i.e.\ the number of players ($\vone' \madj^0 \vone = n$) plus the number of edges plus the number of two-edge paths, and so on. \Cref{E:equilibrium_linear} is a linear equation and its solution is the equilibrium price
\begin{equation}
  P^*
  =
  \frac{C + \oP \bonacich(\madj)}{1+\bonacich(\madj)}.
\end{equation}
As we would expect, increasing costs and increasing demand ($\oP = \frac{a}{b}$ in particular) will raise the equilibrium price, but the pass-through is imperfect.
Increasing the number of firms or the number of connections between firms increases the equilibrium price through the marginalization effects discussed above.
Similarly, we can compute the markups for individual firms,
\begin{equation}
  p_i^* 
  = c_i+\sum_{k=1}^n \ve_i' \madj^{k-1} \vone g_k(P^*)
  = c_i+
  \frac{\bonacich_i(\madj)}{1+\bonacich(\madj)} (\oP-C)
  ,
\end{equation}
where $\bonacich_i(\madj) = \sum_{k=1}^n \ve_i' \madj^{k-1} \vone$ is the sum of influences of firm $i$, i.e.\ $\ve_i' \madj^0 \vone = 1$ (``influencing'' oneself) plus $\ve_i' \madj^1 \vone = $ number of players $i$ influences plus the number of paths starting from $i$. By construction $\bonacich(\madj) = \sum_{i=1}^n \bonacich_i(\madj)$. 
 
Consider the example of network described by \cref{F:scn_unordered}, for which the corresponding $\madj^{k-1}\vone$ terms are computed in \cref{E:scn_unordered}.
Suppose that $D(P)=1-P$, and there are no costs, and no price-takers ($C=0$). Then $\bonacich(\madj) = 6+6+1=13$ and therefore $P^* = \frac{\bonacich(\madj)}{1+\bonacich(\madj)} = \frac{13}{14}$. Similarly, individual prices $p_i^* = \frac{\bonacich_i(\madj)}{1+\bonacich(\madj)}$. For example, $p_L^* = \frac{4}{14}, p_T^* = p_F^* = p_C^* = \frac{1}{14}, p_D^* = \frac{2}{14}$, and $p_R^* = \frac{4}{14}$. 
In particular, observe that $p_L^* = p_R^*$, but for different reasons---firm $L$ influences three firms directly, whereas $R$ influences two firms directly and one indirectly. In the case of linear demand, these two types of influences are weighted equally.

\subsection{Power Demand} \label{SS:power}

The calculations are similar for more general power demand $D(P) = d\sqrt[\beta]{a-bP}$. Then $g(P) = \beta (\oP-P)$ with $\oP=\frac{a}{b}$ and therefore $g_k(P) = \beta^k (\oP-P)$, so that \cref{E:equilibrium} gives the same expression for the equilibrium price of the final good
\begin{equation}
  P^*
  =
  \frac{C+\oP \bonacich^{\beta}(\madj)}{1+\bonacich^{\beta}(\madj)}
\end{equation}
but now $\bonacich^{\beta}(\madj) = \sum_{k=1}^n \beta^k \vone' \madj^{k-1} \vone$, i.e.\ the influences in various levels are weighted by $1,\beta,\beta^2,\dots$. Then $\beta$ can be interpreted as a decay or discount factor for more indirect influences.\footnote{Note that $\beta > 0$ (as otherwise demand would not be decreasing), but it can be bigger or smaller than $1$. In fact, when $\beta=1$ the demand function is linear, so that $\bonacich^{\beta}(\madj) = \bonacich(\madj)$.}  Similarly for individual firms,
\begin{equation}
  p_i^*  
  = c_i+
  \frac{\bonacich_i^{\beta}(\madj)}{1+\bonacich^{\beta}(\madj)} (\oP-C)
,
\end{equation}
where $\bonacich_i^{\beta}(\madj) = \sum_{k=1}^n \beta^k \ve_i' \madj^{k-1} \vone$, i.e.\ influences are again weighted by factor $\beta^k$.

Consider the example from \cref{F:scn_unordered} again, with $C=0$, and demand function $D(P)=\sqrt{\beta}(1-P)$. 
In particular, if $D(P)=(1-P)^2$, then $\beta=\frac{1}{2}$ and therefore $\bonacich^{\frac{1}{2}}(\madj) = 6 + \frac{1}{2} 6 + \frac{1}{4} 1 = \frac{37}{4}$, so $P^* = \frac{37}{41}$. As anticipated above, since higher weight is on direct influences than indirect influences, firm $L$ sets a higher price (and earns higher profit) than firm $R$, $p_L^* = \frac{10}{41} > p_R^*=\frac{9}{41}$. This is also the reason why the difference in the worst-case and the best-case on \cref{F:dwl_comp_beta01} was relatively small.
On the other hand, if $D(P) = \sqrt{1-P}$, then $\beta = 2$ which implies $P^* = \frac{22}{23}$ and $p_L^* = \frac{7}{23} < p_D^* = \frac{9}{23}$, since now the weight is higher on indirect influences. This explains the larger difference in \cref{F:dwl_comp_beta10}.

\subsection{Logit Demand} \label{SS:logit}

Take logit demand $D(P) = d \frac{e^{-\alpha P}}{1+e^{-\alpha P}} 
$ with $\alpha > 0$. Then $g(P) = \frac{1}{\alpha} \left[ 1+ e^{-\alpha P} \right]$. 

Let us first consider the example discussed in previous subsections to illustrate how the characterization result could be used for more complicated demand functions. Suppose again that $C=0$ and the network is the one described by \cref{F:scn_unordered}. Since the depth of the network is $d(\madj)=3$, we need to compute functions
\begin{align*}
  g_1(P) &= g(P) = \frac{1}{\alpha} \left[ 1+ e^{-\alpha P} \right], \\
  g_2(P) &= -g_1'(P) g(P) 
  = \frac{1}{\alpha} \left[ 1+ e^{-\alpha P} \right] e^{-\alpha P}, \\
  g_3(P) &= -g_2'(P) g(P) 
  =
  \frac{1}{\alpha} \left[ 1+ e^{-\alpha P} \right]
   e^{-\alpha P} \left[ 1+2e^{-\alpha P} \right].
\end{align*}
The equilibrium condition \eqref{E:equilibrium} takes the form
$P^* = 6 g_1(P^*) + 6 g_2(P^*) + g_3(P^*)$, which is straightforward to solve numerically. For example, when $\alpha = 1$, we get 
\[
  P^*
  =
 6 + 13 e^{-P^*} + 9 e^{-2P^*} + 2 e^{-3P^*},
\]
which implies $P^* \approx 6.0313$ and individual prices $p_L^* \approx 1.0096, p_T^* = p_F^* = p_C^* \approx 1.0024, p_D^* \approx 1.0048$, and $p_R^* \approx 1.0096$.

The numeric results point to a more specific property of the equilibrium behavior in the case of logit demand. Namely, all prices are only slightly above $1$. Inspecting $g_k(P)$ functions above reveals the reason. Namely, the term $e^{-\alpha P^*}$ converges to zero as $P^*$ increases. Therefore, for sufficiently large $P^*$, the weight $g_1(P^*)$ converges to a constant $\frac{1}{\alpha}$, whereas the weights $g_k(P^*)$ for $k>1$ converge to zero. Therefore, if the equilibrium price $P^*$ is large enough, it is almost fully driven by the number of players. This observation is formalized as the following \cref{L:logit}.

\begin{lemma}[Approximate Equilibrium with Logit Demand] \label{L:logit}
  With logit demand $D(P)=d \frac{e^{-\alpha}}{1+e^{-\alpha P}}$, the price of final good $P^*$ and individual prices $p_i^*$ satisfy the following conditions
  \begin{enumerate}
      \item $P^* > C + \frac{n}{\alpha}$ and $p_i^* > c_i + \frac{1}{\alpha}$ for all $i$,
      \item $P^* = C + \frac{n}{\alpha} + O\left( \left[\frac{2}{e} \right]^n \right)$
      and $p_i^* = c_i + \frac{1}{\alpha} + O \left( \left[ \frac{2}{e} \right]^n \right)$ for all $i$.\footnote{
    Where $f(n) = O(g(n))$ means that $\limsup _{n \to \infty} \left| \frac{f(x)}{g(x)} \right|<\infty$.
  }
  \end{enumerate}
\end{lemma}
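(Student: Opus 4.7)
The plan is to apply the equilibrium characterization from \cref{T:characterization} directly to the logit case. From \cref{E:completelymonotonefunctions}, $g_1(P) = g(P) = \frac{1}{\alpha}(1 + e^{-\alpha P})$, and the recursion $g_{k+1}(P) = -g_k'(P) g(P)$ yields by induction $g_k(P) = \frac{1}{\alpha} P_k(y)$ with $y = e^{-\alpha P}$, where $P_k$ is a polynomial with $P_1(y) = 1+y$ and, for every $k \geq 2$, $P_k(0) = 0$ with coefficient of $y^1$ equal to $1$. For Part 1, since $\vone' \madj^0 \vone = n$ and $g_k(P^*) \geq 0$ for all $k$ by \cref{L:gkmonotone},
\[
P^* - C = \sum_{k=1}^n \vone' \madj^{k-1}\vone \cdot g_k(P^*) \;\geq\; n g_1(P^*) \;=\; \frac{n}{\alpha}\bigl(1 + e^{-\alpha P^*}\bigr) \;>\; \frac{n}{\alpha},
\]
and likewise $p_i^* - c_i \geq g_1(P^*) > 1/\alpha$ using $\ve_i' \madj^0 \vone = 1$.

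For Part 2, I would rearrange the characterization as
\[
P^* - C - \frac{n}{\alpha} \;=\; \frac{n}{\alpha}\, y^* \;+\; \sum_{k=2}^n \vone' \madj^{k-1}\vone \cdot g_k(P^*),
\]
where $y^* = e^{-\alpha P^*}$. Part 1 gives $y^* \leq e^{-\alpha C} e^{-n}$, which already handles the first term. For the sum, the plan uses two ingredients: first, acyclicity of $\madj$ means every walk visits distinct nodes, so $\vone' \madj^{k-1}\vone \leq \binom{n}{k}$ (in a transitive acyclic digraph a walk is a chain in the induced partial order); second, the structure $P_k(y) = y + O(y^2)$ for $k \geq 2$ gives $g_k(P^*) \leq \frac{y^*}{\alpha}\bigl(1 + O(y^*)\bigr)$ with the hidden constant uniformly controlled when $y^*$ is small. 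Combining,
\[
\sum_{k=2}^n \vone' \madj^{k-1}\vone \cdot g_k(P^*) \;\leq\; \frac{y^*}{\alpha} \sum_{k=2}^n \binom{n}{k}\bigl(1 + O(y^*)\bigr) \;\leq\; \frac{2^n y^*}{\alpha}\bigl(1 + O(y^*)\bigr) \;=\; O\bigl((2/e)^n\bigr).
\]
The individual-price version is identical after replacing $\vone' \madj^{k-1}\vone \leq \binom{n}{k}$ by $\ve_i' \madj^{k-1}\vone \leq \binom{n-1}{k-1}$, which sums to at most $2^{n-1}$.

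The main obstacle is justifying the uniform-in-$k$ estimate $P_k(y^*)/y^* \leq 1 + O(y^*)$ for all $k$ as large as $n$. A naive iteration of $P_{k+1}(y) = y(1+y) P_k'(y)$ suggests some polynomial coefficients grow factorially in $k$, which would swamp the exponential decay of $y^*$. The resolution relies on two observations provable by induction on the recursion: the coefficient of $y^1$ in $P_k$ equals $1$ for every $k \geq 2$ (so the leading-in-$y^*$ contribution is $k$-independent), while the coefficient of $y^j$ for $j \geq 2$ grows at most geometrically in $k$, so that when multiplied by $(y^*)^j \leq e^{-jn}$ and by $\binom{n}{k}$ the total higher-order contribution is of order at most $(c/e^2)^n$ for some constant $c$, strictly smaller than the leading-order $(2/e)^n$.
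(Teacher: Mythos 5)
Your proposal is correct in outline and shares the paper's skeleton: Part 1 is exactly the paper's argument (drop nonnegative $g_k$ terms, use $g(P^*)>1/\alpha$), and Part 2, like the paper, bounds $e^{-\alpha P^*}=O(e^{-n})$ via Part 1 and then controls the weighted sum of higher-order terms by a count of walks of order $2^n$. Where you genuinely diverge is the key technical step. The paper's \cref{E:logitgks} runs an induction on $k$ using the recursion $g_{k+1}=-g_k'g$ and the Leibniz rule, concluding $g_k(P^*)=O(e^{-n})$ for $k\geq 2$, and then multiplies by $\bonacich(\madj)\leq 2^n-1$; it never makes the $k$-dependence of the implied constants explicit. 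You instead write $g_k(P)=\frac{1}{\alpha}P_k(y)$ with $y=e^{-\alpha P}$ and $P_{k+1}(y)=y(1+y)P_k'(y)$, isolate the linear coefficient (which equals $1$ for all $k\geq 2$), and bound the higher coefficients---this makes the uniformity-in-$k$ issue, which the paper glosses over with $O(\cdot)$ notation inside an induction, explicit and addressable, and your per-level bound $\vone'\madj^{k-1}\vone\leq\binom{n}{k}$ (chains in the transitive acyclic order) sharpens the paper's aggregate bound. One caveat: as stated, ``the coefficient of $y^j$ grows at most geometrically in $k$'' is only true per fixed $j$, with ratio growing in $j$ (in fact the coefficient of $y^j$ in $\alpha g_k$ equals $(j-1)!\,S(k,j)\leq j^{k-1}$, where $S(k,j)$ is a Stirling number of the second kind, and the leading coefficient is $(k-1)!$); so the higher-order estimate must be organized as a double sum over $j$ and $k$, e.g.\ for each $j\geq 2$ the contribution is at most $\frac{1}{\alpha}\sum_k \binom{n}{k} j^{k-1} e^{-jn}\leq \frac{1}{\alpha}\bigl[(1+j)/e^{j}\bigr]^n$, which summed over $j$ is dominated by the $j=2$ term and is $o\bigl((2/e)^n\bigr)$. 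With that bookkeeping spelled out, your argument closes the gap you flag and yields the lemma; indeed it is, if anything, more careful than the paper's own treatment of the constants.
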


\Cref{L:logit} implies when $n$ is large enough, then $P^* \approx C + \frac{n}{\alpha}$ and each $p_i^* \approx c_i + \frac{1}{\alpha}$. This is a limit result, but as we saw from the example above, the approximation with $n=6$ seems already quite precise. \Cref{F:logitps} illustrates that the convergence is indeed fast. It shows that while for small numbers of players, there is a difference between the lower bound (simultaneous decisions) and the upper bound (sequential decisions), the difference shrinks quickly and becomes negligible with 5--10 players. In particular, the figure illustrates that $\frac{1}{n} \left[ P^* - C - \frac{n}{\alpha} \right] \approx 0$ for any network with about ten players or more.
\begin{figure}
  \centering
  \includegraphics[trim={0.13cm 0.3cm 0.75cm 0.2cm},clip,width=0.5\linewidth]{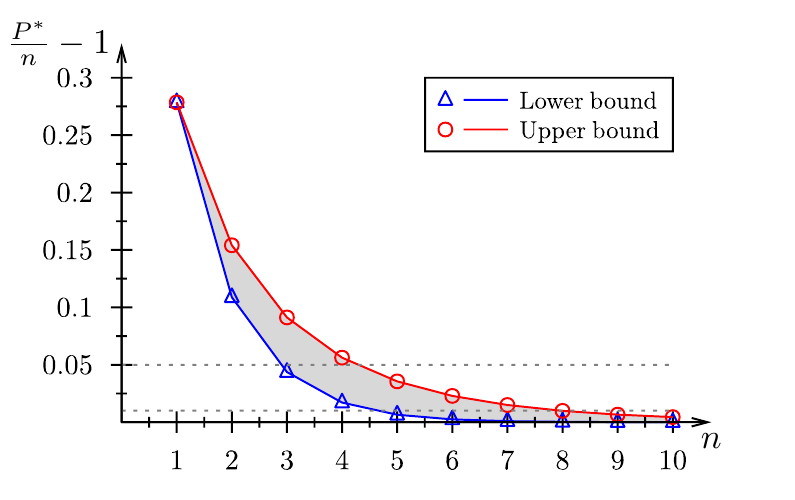}
  \caption{
    Bounds for equilibrium prices with logit demand $D(P) = \frac{e^{-P}}{1+e^{-P}}$ and $C=0$ depending on the number of firms.
  }
  \label{F:logitps}
\end{figure}

\section{Discussion} \label{S:discussion}

This paper characterizes the equilibrium behavior for a general class of price-setting games on a network. Under regularity assumptions, there is a unique equilibrium, which is straightforward to compute even with arbitrary demand function and complex networks. For the most common demand functions, such as linear, power, and logit demand, I provide even simpler characterization results.

The key distortion is the multiple-marginalization, which leads to too high markups both for efficiency and joint profit maximization. The marginalization problem increases with the number of firms but is magnified by strategic interactions. Firms set too high markups, not only because they do not internalize the negative impact on consumer surplus and other firms' profits, but also because they benefit from discouraging the other firms to set high markups.

The results define a natural measure of influentiality that ranks firms according to their markups and profits. Firms are more influential if they influence more firms or more influential firms. In some special cases, the influentiality measure simplifies to standard measures of centrality. I give examples, where it takes the form of Bonacich centrality, degree centrality, or is independent of the network structure.

Although the results are quite general in terms of network structure and demand functions, there are significant simplifying assumptions in other dimensions. First, I assume constant marginal costs, which simplifies the characterization but is not crucial for the analysis. Second, I model the competition in an extreme way---firms are either monopolists or price-takers. In some sense, this covers a few intermediate cases, where firms may be monopolists in some range, but when prices become too high, they become price-takers. But it would be certainly interesting to study other forms of imperfect competition, repeated interactions, bargaining, and more complex contract structures than posted prices. 

In this paper, the analysis is described in terms of price setting on a supply-chain network that supplies a single final product. There are other applications fitting the same mathematical model. An obvious example is multiple monopolists sell perfect complements. More generally, the model applies whenever multiple players choose actions, so that their payoffs depend linearly on their own actions, the marginal benefit is a decreasing function of the total action, and the actions are (higher-order) strategic substitutes. For example, the private provision of public goods and contests satisfy this general description.

Finally, the results have significant policy implications, which I have not discussed so far. In the following, I describe two simple examples to highlight two important policy implications. First, when analyzing mergers and acquisitions, it is crucial for the regulator to consider how the network of influences is affected. Second, in trade policy, small increases in any tariffs typically hurt all players, but more influential firms are harmed the most. Moreover, when considering non-marginal changes in trade policy, it is again crucial to consider the impact on the network as a whole.

\subsection{Example: Mergers} \label{SS:mergers}

To highlight how merger policy can be affected by changes in the network of influences, consider the following example with five monopolists. Firm $1$ produces raw material that is an input for two intermediate good producers, firms $2$ and $3$. Then final good producer firm $4$ uses inputs from firms $1$, $2$, and $3$ to produce the final good and sells it to final consumer through a retailer, firm $5$. The material flows are illustrated in \cref{F:questionablemerger2sc_pre}. However, it is important to specify how firms influence each other. Suppose that firms $2$ and $4$ are influenced by firm $1$. Firm $3$ makes the choice independently of firm $1$ and influences firm $4$. Finally, firm $5$ makes its choice independently of other firms. This is illustrated by \cref{F:questionablemerger2_pre}. 
To make the example more concrete, suppose that there are no costs and the demand function is $D(P) = (1-P)^4$, which allows us to use explicit formulas from \cref{SS:power}.

Suppose now that firms $1$ and $2$ would like to merge, leading to a new supply chain illustrated by \cref{F:questionablemerger2sc_m2}. Should the competition authority approve the merger? There are a few important aspects that the policymaker may consider. First, how does it affect the competition? By assumptions, we are analyzing the production of a single product with fixed demand function and monopolistic input providers, so the competition remains unaffected. Second, does it lead to cost-reductions or synergies? Again, we assume that there are no costs, so this remains unaffected. Third, there is one less firm, which reduces the marginalization problem. Combining these arguments, conventional wisdom suggests that the merger is socially desirable.
\begin{figure}[!ht]
  \centering
  \subfloat[Pre-merger material flows]{\scalebox{0.85}{
    \tikzset{>=stealth',every on chain/.append style={join},
  every join/.style={->}}  
\begin{tikzpicture}
[
cnode/.style={draw=black,fill=#1,minimum width=3mm,circle},
]
\node[cnode=white] (n1) at (0,0)     {1};
\node[cnode=white] (n2) at (1.3,1)   {2};            
\node[cnode=white] (n3) at (1.3,-1)  {3};
\node[cnode=white] (n4) at (2.6,0)   {4};
\node[cnode=white] (n5) at (4.2,0)   {5};

\draw[->,dashed] (n1) -- (n2);
\draw[->,dashed] (n1) -- (n3);
\draw[->,dashed] (n1) -- (n4);
\draw[->,dashed] (n2) -- (n4);
\draw[->,dashed] (n3) -- (n4);
\draw[->,dashed] (n4) -- (n5);
\end{tikzpicture}%

  } \label{F:questionablemerger2sc_pre}}
  \qquad
  \subfloat[Pre-merger influences]{\scalebox{0.85}{
    \tikzset{>=stealth',every on chain/.append style={join},
  every join/.style={->}}  
\begin{tikzpicture}
[
cnode/.style={draw=black,fill=#1,minimum width=3mm,circle},
]
\node[cnode=white] (n1) at (0,0)     {1};
\node[cnode=white] (n2) at (1.3,1)   {2};            
\node[cnode=white] (n3) at (1.3,-1)  {3};
\node[cnode=white] (n4) at (2.6,0)   {4};
\node[cnode=white] (n5) at (4.2,0)   {5};

\draw[->,black] (n1) -- (n2);
\draw[->,black] (n1) -- (n4);
\draw[->,black] (n3) -- (n4);
\end{tikzpicture}%
  } \label{F:questionablemerger2_pre}}
  \\
  \subfloat[Scenario A material flows]{\scalebox{0.85}{
    \tikzset{>=stealth',every on chain/.append style={join},
  every join/.style={->}}  
\begin{tikzpicture}
[
cnode/.style={draw=black,fill=#1,minimum width=3mm,circle},
]
\node[cnode=white] (n1) at (0.3,0.5) {1+2};
\node[cnode=white] (n3) at (1.3,-1)  {3};
\node[cnode=white] (n4) at (2.6,0)   {4};
\node[cnode=white] (n5) at (4.2,0)   {5};

\draw[->,dashed] (n1) -- (n4);
\draw[->,dashed] (n1) -- (n3);
\draw[->,dashed] (n3) -- (n4);
\draw[->,dashed] (n4) -- (n5);
\end{tikzpicture}%
  } \label{F:questionablemerger2sc_m1}}
  \qquad
  \subfloat[Scenario A influences]{\scalebox{0.85}{
    \tikzset{>=stealth',every on chain/.append style={join},
  every join/.style={->}}  
\begin{tikzpicture}
[
cnode/.style={draw=black,fill=#1,minimum width=3mm,circle},
]
\node[cnode=white] (n1) at (0.3,0.5) {1+2};
\node[cnode=white] (n3) at (1.3,-1)  {3};
\node[cnode=white] (n4) at (2.6,0)   {4};
\node[cnode=white] (n5) at (4.2,0)   {5};

\draw[->,black] (n1) -- (n4);
\draw[->,black] (n3) -- (n4);
\end{tikzpicture}%
  } \label{F:questionablemerger2_m1}}
  \\
  \subfloat[Scenario B material flows]{\scalebox{0.85}{
    \tikzset{>=stealth',every on chain/.append style={join},
  every join/.style={->}}  
\begin{tikzpicture}
[
cnode/.style={draw=black,fill=#1,minimum width=3mm,circle},
]
\node[cnode=white] (n1) at (0.3,0.5) {1+2};
\node[cnode=white] (n3) at (1.3,-1)  {3};
\node[cnode=white] (n4) at (2.6,0)   {4};
\node[cnode=white] (n5) at (4.2,0)   {5};

\draw[->,dashed] (n1) -- (n4);
\draw[->,dashed] (n1) -- (n3);
\draw[->,dashed] (n3) -- (n4);
\draw[->,dashed] (n4) -- (n5);
\end{tikzpicture}%
  } \label{F:questionablemerger2sc_m2}}
  \qquad
  \subfloat[Scenario B influences]{\scalebox{0.85}{
    \tikzset{>=stealth',every on chain/.append style={join},
  every join/.style={->}}  
\begin{tikzpicture}
[
cnode/.style={draw=black,fill=#1,minimum width=3mm,circle},
]
\node[cnode=white] (n1) at (0.3,0.5) {1+2};
\node[cnode=white] (n3) at (1.3,-1)  {3};
\node[cnode=white] (n4) at (2.6,0)   {4};
\node[cnode=white] (n5) at (4.2,0)   {5};

\draw[->,black] (n1) -- (n4);
\draw[->,black] (n3) -- (n4);
\draw[->,black] (n1) -- (n3);
\end{tikzpicture}%
  } \label{F:questionablemerger2_m2}}
  \caption{Examples: merger scenarios for firms $1$ and $2$. \label{F:questionablemerger2}}%
\end{figure}

Using results from \cref{SS:power} to computing the pre-merger price of the final good gives us $P^* \approx 0.5897$. The corresponding profits, consumer surplus ($CS$) and social welfare ($SW = CS + \sum_i \pi_i$) are given in \cref{T:mergers}. Let us compare it with the simplest post-merger scenario (let us call it scenario A, see \cref{F:questionablemerger2_m1}), where the newly merged firm $1+2$ continues to influence only firm 4. Then the new price of the final good is $P^A \approx 0.5294$. This is clearly better for both the social welfare and for the total profit, due to a reduction in marginalization. Indeed, \cref{T:mergers} shows that the post-merger 1 scenario gives higher consumer surplus, total profit, social welfare, and all non-merging firms. But the joint profit of the merging firms $1$ and $2$ decreases from $0.0073$ to $0.0072$. The reason for this is that although the total profit has increased, the sum of the influences of the two firms decreased with the merger. This decrease in influentiality is large enough to make this merger undesirable for them. In a situation where they asked for permission to merge, this is probably not the right scenario to consider.
\begin{table}[ht!]
 \centering
   \begin{tabular}{lc|ccccc|cc}
    Scenario & $P^*$ & $\sum_i \pi_i^*$ & $\pi_1^*+\pi_2^*$ & $\pi_3^*$ & $\pi_4^*$ & $\pi_5^*$ & $CS$ & $SW$  \\
    \hline
    \multicolumn{9}{c}{(1) Demand function $D(P) = (1-P)^4$} \\
    Pre-merger & 
    $0.5897$ &  
      $0.0167$ & $0.0073$ & $0.0036$ & $0.0029$ & $0.0029$ &
      $0.0023$ & $0.0190$ \\
    Scenario A & 
    $0.5294$ & 
      $0.0260$ & $0.0072$ & $0.0072$ & $0.0058$ & $0.0058$ &
      $0.0046$ & $0.0306$ \\
    Scenario B & 
    $0.5461$ & 
      $0.0232$ & $0.0075$ & $0.0060$ & $0.0048$ & $0.0048$ &
      $0.0039$ & $0.0270$ \\
    \hline
    \multicolumn{9}{c}{(2) Demand function $D(P) = (1-P)^{\frac{4}{3}}$} \\
    Pre-merger & 
    $0.8447$ &  
      $0.0705$ & $0.0341$ & $0.0170$ & $0.0097$ & $0.0097$ &
      $0.0056$ & $0.0761$ \\
    Scenario B & 
    $0.8363$ & 
      $0.0749$ & $0.0337$ & $0.0192$ & $0.0110$ & $0.0110$ &
      $0.0063$ & $0.0812$ \\
    \hline
    \multicolumn{9}{c}{(3) Demand function $D(P) = (1-P)^{\frac{3}{4}}$} \\
    Pre-merger & 
    $0.9231$ &  
      $0.1348$ & $0.0699$ & $0.0350$ & $0.0150$ & $0.0150$ &
      $0.0064$ & $0.1412$ \\
    Scenario B & 
    $0.9288$ & 
      $0.1281$ & $0.0713$ & $0.0306$ & $0.0131$ & $0.0131$ &
      $0.0056$ & $0.1337$ \\
    \hline
    \multicolumn{9}{c}{(4) Demand function $D(P) = (1-P)^5$} \\
    Pre-merger & 
    $0.5283$ &  
      $0.0123$ & $0.0053$ & $0.0026$ & $0.0022$ & $0.0022$ &
      $0.0018$ & $0.0142$ \\
    Scenario A & 
    $0.4681$ & 
      $0.0199$ & $0.0054$ & $0.0054$ & $0.0045$ & $0.0045$ &
      $0.0038$ & $0.0237$ \\
    \hline
\end{tabular}
\caption{Examples of merger scenarios: 
  Panel 1: Scenario B is a Pareto improvement compared to the pre-merger situation. Scenario A is socially desirable, but firms 1 and 2 would choose not to merge. Panel 2: Scenario B would still be socially desirable, but would not happen voluntarily. Panel 3: Scenario B is socially undesirable, but would happen. Panel 4: Even scenario B is a Pareto improvement.
} \label{T:mergers}
\end{table}

Suppose alternatively that after the merger, the new firm $1+2$ becomes more influential, so that it can also influence the decision of firm $3$, as illustrated by \cref{F:questionablemerger2_m2}. In this case, the equilibrium price of the final good is $P^B \approx 0.5461$, which is still lower than the pre-merger price, so that this merger is socially desirable. Moreover, the joint profit of the merging firms is now $0.0075$, which is larger than the sum of their pre-merger profits. As \cref{T:mergers} illustrates, this scenario is, in fact, a strict Pareto-improvement compared to pre-merger---consumer surplus, profits of all firms (including non-merging), and social welfare strictly increase. 
The reason is that the additional influences (one direct and one indirect) allow the new firm $1+2$ capture a larger share of the surplus and this makes the merger desirable for them.

Note that these conclusions depend on the details of the network structure as well as the demand function. In scenario B, there is one less firm, the same number of direct influences, and one more indirect influence than in the pre-merger case. Whether or not it is socially preferable depends on the weights on different levels of influences. The particular demand function $D(P)=(1-P)^4$ is a power demand where the weights are declining at rate $\beta = \frac{1}{4}$. That is, indirect influences affect marginalization less than direct influences and therefore the merger is still socially desirable. Similarly, the joint influentiality of firms 1 and 2 depends on the weights.
Panel 2 in \cref{T:mergers} illustrates that when the demand function is $D(P)=(1-P)^{\frac{4}{3}}$, i.e.\ $\beta=\frac{3}{4}$, which puts more weight on the indirect weights, then the conclusions change. The merger is still socially desirable, but now the new firm 1+2 does not have enough influence to make the merger desirable.
Panel 3 in \cref{T:mergers} shows that if $D(P)=(1-P)^{\frac{3}{4}}$, so that $\beta=\frac{4}{3}>1$ and therefore indirect influences have larger weight than direct influences, this merger is not socially desirable anymore. However, in this case, the influentiality of the new firms is so large that they would prefer to merge. Finally, panel 4 shows that if $D(P)=(1-P)^5$, then the weight on indirect influences is so low that even a merger without any additional influences (scenario A) is a Pareto improvement compared to the pre-merger situation.

This example shows that even in simple cases the changes in the network of influences may have drastic policy implications. Note that the example is simple enough so that the equilibria can be computed directly. \Cref{T:characterization} provides a characterization for all possible scenarios when the demand function and the network may be more complex.

\subsection{Example: Tariffs} \label{SS:trade}

The previous discussion already illustrates the importance of taking the changes in the network of influences into account when considering policy decisions. The same message applies to trade policies as well. Changes in tariffs or quotas, as well as any trade restrictions, influence the supply chains and the interactions of firms within a supply chain. Therefore they naturally impact the network of influences. The results in this paper provide a tool to compare the outcomes under different scenarios.

In the simplest case when changes in tariffs are marginal so that the network of influences is unchanged, \cref{T:characterization} provides specific predictions. In particular, let us assume that the marginal cost of each input is $c_i = \hat{c}_i + t_i$, where $\hat{c}_i$ is the physical cost and $t_i$ is the tariff on good $i$. Then changes in tariffs can be thought of as changes in $\boldsymbol{c}=(c_0,c_1,\dots,c_n)$. 

When the changes are marginal, only the total marginal cost $C = \sum_{i=0}^n c_i$ affects the equilibrium price, profits, consumer surplus, and total welfare. Moreover, the equilibrium price is increasing and all payoffs are decreasing with $C$. This is easy to see from \cref{E:equilibrium}. Differentiating the equation gives 
\begin{equation}
  P^* - \sum_{k=1}^n \vone' \madj^{k-1} \vone g_k(P^*) = C
  \;\;\Rightarrow\;\;
  \dd{P^*}{C} 
  = \frac{1}{1 - \sum_{k=1}^n \vone' \madj^{k-1} \vone g_k'(P^*)} > 0.
\end{equation}
as each $g_k'(P^*) \leq 0$. Therefore only the aggregate changes in tariffs affect the equilibrium price of the final good. Clearly, consumer surplus depends only on the price of the final good. 
Although individual prices $p_i^*$ are affected by individual tariffs, the individual equilibrium profits $\pi_i(\bp^*) = \infl_i(\madj) D(P^*)$ are only affected by tariffs through its impact on the price of the final good. Any increase in the sum of tariffs leads to an increase in the price of the final good and therefore decrease in profits that are proportional to influentiality measure $\infl_i(\madj)$. So, the more influential firms are affected more by the tariffs, regardless of which individual tariffs or subsidies are imposed.
Finally, defined as the sum of consumer surplus, all profits, and the tariff revenues are marginally affected as
\begin{equation}
  TW
  = \underbrace{\int_{P^*}^{\infty} D(P) dP}_{=\text{Consumer Surplus}} 
  + \underbrace{D(P^*) (P^*-\hat{C}-T)}_{=\sum_i \pi_i^*}
  + D(P^*) T.
\end{equation}
Direct effects of tariffs cancel out as profits are reduced exactly by the tariff revenue. The only effect is through the change in the price of the final good, which is increasing in tariffs. Differentiating total welfare with respect to the equilibrium price of the final good gives
\begin{equation}
  \dd{TW}{P^*}
  = 
  -D(P^*) 
  + D'(P^*) (P^*-\hat{C})
  + D(P^*)
  = 
  D'(P^*) (P^*-\hat{C})
  < 0
  \iff 
  P^* > \hat{C}.
\end{equation}
This is a standard textbook finding implying that the socially optimal tax on a monopoly is, in fact, a subsidy that equalizes price with marginal cost. 

Note that there are important aspects missing from this simple application of the model. The main goal of using tariffs and other trade policies is to affect trade flows. This changes the supply chain network and the network of influences. As highlighted in the discussion above, it may have a large impact both on the consumer surplus and the profits and should be therefore considered in any such policy evaluation.

\bibliography{zotero_pricing_networks}
\bibliographystyle{ecta.bst}

\appendix

\section{Proofs} \label{A:proofs}

\subsection{Proof of \texorpdfstring{\cref{E:completelymonotonefunctions}}{lemma 1}}

\begin{proof}
  In each case, I directly verify the claim:
  \begin{enumerate}
    \item Linear demand is a special case of power demand with $d=\beta=1$. 
    \item Power demand implies $g(P) 
    = - \frac{d (a-bP)^{\frac{1}{\beta}}}{d \frac{1}{\beta} (a-bP)^{\frac{1}{\beta}-1} (-b)} 
    = \beta (\oP-P)$, where $\oP=\frac{a}{b}$. Then $-g'(P) = \beta > 0$ and $(-1)^k \dd[k]{g(P)}{P} = 0$, for all $k > 1$.
    \item Logit demand implies $g(P) 
    = \frac{1}{\alpha} \left[ 1+e^{-\alpha P} \right]
    $. Then $(-1)^k \dd[k]{g(P)}{P} = \alpha^{k-1} e^{-\alpha P} > 0$.
    \item Exponential demand implies $g(P)
    = \frac{1}{\alpha} \left[
    \oP e^{-\alpha P} - 1
    \right]
    $. Therefore $(-1)^k \dd[k]{g(P)}{P} = \alpha^{k-1} \oP e^{-\alpha P} > 0$.
  \end{enumerate}
\end{proof}

\subsection{Proof of \texorpdfstring{\cref{T:characterization}}{theorem 1}}

Before the proof, let me introduce some useful notation.
Each player $i \in \gN = \{1,\dots,n\}$, observes prices of some players. Let the set of these players be $\Nobs_i = \{j : a_{ji}=1 \} \subset \gN$ (possibly empty set) and vector of these prices $\pobs_i = (p_j)_{j \in \Nobs_i}$. Player $i$'s strategy is $p_i^*(\pobs_i)$. 
Player $i$ also influences some players, let the set of these players be $\Ninf_i = \{ j : a_{ij}=1 \} \subset \gN$ (again, possibly empty). Each such player $j \in \Ninf_i$ uses the equilibrium strategy $p_j^*(\pobs_j)$. By definition, $i \in \Nobs_j$, i.e.\ $p_i$ is one of the inputs in $\pobs_j$. However, $i$ does not necessarily observe all prices in $\pobs_j$, therefore it must make an equilibrium conjecture about these values. Let $p_j^i(p_i,\pobs_i)$ denote player $j$'s action as seen by player $i$. That is, $p_j^i(p_i,\pobs_i)=p_j^*(\pobs_j')$, where $\pobs_j = (p_k')_{k\in\Nobs_j}$ is such that $p_k'=p_k$ if $k \in \Nobs_i$ or $k=i$ and $p_k'=p_k^i(p_i,\pobs_i)$ otherwise. The last step makes the definition recursive, but it is well-defined, as each such step strictly reduces the number of arguments in the function.
Finally, there are also some players whose prices that $i$ neither observes nor influences, let this set be $\Nuno_j = \{ j : a_{ji}=a_{ij} = 0\} \subset \gN$. For these players, $i$ expects the actions to be $p_j^i(\pobs_i)$ defined in the same way as above, but its arguments do not include $p_i$.

Using this notation, a firm $i$ that observes $\pobs_i$ and sets its price to $p_i$, expects the price of the final good to be
\begin{equation} \label{E:Pi}
  P^i(p_i|\pobs_i) 
  = c_0 
  + p_i
  + \sum_{j \in \Nobs_i} p_j
  + \sum_{j \in \Ninf_i} p_j^i(p_i,\pobs_i)
  + \sum_{j \in \Nuno_i} p_j^i(\pobs_i).
\end{equation}

The main idea in the proof is the following. Instead of choosing price $p_i$ to maximize profit $(p_i-c_i) D(P^i(p_i|\pobs_i))$, we can think of player $i$ choosing the final good price $P$ to induce. For this, let me assume that in the relevant range, $P^i(p_i|\pobs_i)$ is smooth and strictly increasing in $p_i$, so that it has a differentiable and strictly increasing inverse function $f_i(P|\pobs_i)$ such that $P^i(f_i(P|\pobs_i)|\pobs_i) = P$. Then the maximization problem is
\[
  \max_P [f_i(P|\pobs_i)-c_i] D(P),
\]
which leads to first-order condition $f_i'(P|\pobs_i) D(P)
  +
  [f_i(P|\pobs_i)-c_i] D'(P) = 0$ or equivalently
\begin{equation} \label{E:FOCi}
  f_i(P|\pobs_i)-c_i  
  = g(P) f_i'(P|\pobs_i).
\end{equation}
Note that there is one-to-one mapping between representing equilibrium behavior in terms of functions $f_i(P|\pobs_i)$ and in terms of $p_i^*(\pobs_i)$.

\begin{proof}
  Observe that the equilibrium must be interior, i.e.\ each $c_i < p_i < \oP$ for each firm. If this is not the case for the firm $i$, then its equilibrium profit is non-positive. This could be for one of two reasons. First, the equilibrium price of the final good is so high that $D(P)=0$. In this case, all equilibrium profits are non-positive and there must be at least one firm $i$ who, by reducing its price (and anticipating the responses of firms influenced), can make the final good price low enough so that it ensures a strictly positive profit. This would be a profitable deviation. Second, if $P<\oP$ and $p_i \leq c_i$, then firm $i$ can raise its price slightly and increase its profit. 
  
  I will first derive necessary conditions for an interior equilibrium and combine them into one necessary condition, which gives \cref{E:equilibrium}. I then show that it has a unique solution and finally verify that it is indeed an equilibrium by verifying that each firm indeed chooses a price that maximizes its profit.

  Let us start with any player $i$ who does influence any other players, i.e.\ $\ve_i' \madj \vone = 0$ or equivalently $\Ninf_i = \varnothing$. Then we can rewrite \cref{E:Pi} as
  \begin{equation}
    P = c_0 + f_i(P|\pobs_i) 
    + \sum_{j \in \Nobs_i} p_j
    + \sum_{j \in \Nuno_i} p_j^i(\pobs_i).
  \end{equation}
  Differentiating this expression with respect to $P$ shows that $f_i'(P|\pobs_i) = 1$ (that is, player $i$ can raise the price of the final good by $\varepsilon$ by raising its own price by $\varepsilon$). Therefore \cref{E:FOCi} implies $f_i(P|\pobs_i) = c_i+g(P)$. 
  Note that this expression is independent of $\pobs_i$, so I can drop it as an argument for $f_i$ and write simply as $f_i(P)=c_i+g(P)$.
  
  Let us take now any player $i$ and suppose that the optimal behavior of all players $j \in \Ninf_i$ is described corresponding functions $f_j(P)$ that do not depend on the remaining arguments $\pobs_j$. Then we can rewrite \cref{E:Pi} as
  \begin{equation}
    P 
      = c_0 
      + f_i(P|\pobs_i)
      + \sum_{j \in \Nobs_i} p_j
      + \sum_{j \in \Ninf_i} f_j(P)
      + \sum_{j \in \Nuno_i} p_j^i(\pobs_i).
  \end{equation}
  Differentiating this expression and inserting it to \cref{E:FOCi} gives
  \begin{equation}
     f_i'(P|\pobs_i)
     = 1 - \sum_{j \in \Ninf_i} f_j'(P)
     \;\;
     \Rightarrow
     \;\;
     f_i(P|\pobs_i)
     = g(P) \left[
       1 - \sum_{j \in \Ninf_i} f_j'(P)
     \right].
  \end{equation}
  This expression is again independent of the arguments $\pobs_i$, which we can therefore drop.
  Moreover, these arguments give precise analytic expressions for $f_i(P)$ functions. We already saw that $f_i(P)=g(P)=\sum_{k=1}^n \ve_i' \madj^{k-1} \vone g_k(P)$ when $\ve_i' \madj^{k-1} \vone = 0$ for all $k>1$ (i.e.\ players who do not influence anybody). Suppose that every player $j \in \Ninf_i$ has
  \begin{equation} \label{E:fjdef}
    f_j(P) - c_j= \sum_{k=1}^n \ve_j' \madj^{k-1} \vone g_k(P).
  \end{equation}
  Then for player $i$ we must have
  \begin{equation}
    f_i(P) - c_i
     = g(P) \left[
       1 - \sum_{j \in \Ninf_i} f_j'(P)
     \right]
     = \underbrace{g(P)}_{= \ve_i' \madj^0 \vone g_1(P)}
     + \sum_{k=1}^n \underbrace{\sum_{j \in \Ninf_i}  \ve_j' \madj^{k-1} \vone}_{=\ve_i' \madj^k \vone} \underbrace{[-g_k'(P)g(P)]}_{g_{k+1}(P)},
  \end{equation}
  which, after change of variables from $k$ to $k-1$ and combining the terms, gives the same expression as in \cref{E:fjdef}.\footnote{Note that no player can have level-$n$ influences, i.e.\ $\ve_i' \madj^n \vone = 0$.}  
  
  Therefore on-path, when the equilibrium price of the final good is $P^*$, the individual prices are indeed given by the expressions in the theorem. The price of the final good must be sum all input prices, therefore $P^*$ must satisfy
  \[
    P^*
    = c_0 + \sum_{i \in \gN} f_i(P^*)
    = \underbrace{c_0 + \sum_{i \in \gN} c_j}_{=C} 
    + \sum_{k=1}^n \underbrace{\sum_{i \in \gN} \ve_i' \madj^{k-1} \vone}_{=\vone' \madj^{k-1} \vone} g_k(P^*),
  \]
  which gives the \cref{E:equilibrium}.
  
  Below I prove two technical lemmas (\cref{L:gkmonotone,L:ffimonotone}) provide monotonicity properties that imply existence and uniqueness of equilibria. 
  We can rewrite \cref{E:equilibrium} as $f(P)=P-C-\sum_{k=1}^n \vone' \madj^{k-1} \vone g_k(P) = 0$. At $P=0$ we have $f(0)=-C-\sum_{k=1}^n \vone' \madj^{k-1} \vone g_k(0) < 0$ and $\lim_{P \to \oP} f(P) > 0$. By \cref{L:ffimonotone}, function $f(P)$ is strictly increasing and therefore $f(P)=0$ has a unique solution, which is the equilibrium price of the final good $P^* \in (0,\oP)$.
  
  Next, in the argument above, we assumed that the inverse function of $f_i(P)$ function is strictly increasing. The construction implied a necessary condition that $f_i(P)$ must satisfy and \cref{L:ffimonotone} shows that it implies that $f_i(P)$ is indeed strictly increasing, therefore the inverse function $P^i(p_i|\pobs_i)$ is indeed a well-defined strictly increasing function.
  Finally, to verify that the solution we found is indeed an equilibrium, we need to verify that the solution we derived is indeed a global maximizer for each firm. Notice that by \cref{L:ffimonotone}, the optimality condition \cref{E:FOCi} has a unique solution for each firm. Therefore we have identified a unique local optimum for each firm. As we already verified that corner solutions would give non-positive profits for each firm and the interior solution gives strictly positive profit, this must be a global maximizer. 

  \begin{lemma}[Monotonicity of $g_k(P)$] \label{L:gkmonotone}
    $g_k(P)$ is $(d(\madj)+1-k)$-times monotone.
  \end{lemma}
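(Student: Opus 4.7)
The plan is to proceed by induction on $k$. The base case $k=1$ is exactly the content of \cref{A:ass_monotone}, since $g_1 = g$ is assumed $d(\madj)$-times monotone, which equals $d(\madj)+1-1$. For the inductive step, I would use the recursion $g_{k+1}(P) = -g_k'(P) g(P)$ to transfer monotonicity from $g_k$ to $g_{k+1}$, showing that the monotonicity index drops by exactly one at each level.

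The key auxiliary observation is a Leibniz-type product lemma: if $h_1, h_2 \geq 0$ with $h_i$ being $m_i$-times monotone (meaning $(-1)^j h_i^{(j)}(P) \geq 0$ for $j=1,\dots,m_i$), then the product $h_1 h_2$ is $\min(m_1, m_2)$-times monotone. This is a one-line expansion,
\[
  (-1)^j (h_1 h_2)^{(j)}(P) = \sum_{i=0}^j \binom{j}{i} \bigl[(-1)^i h_1^{(i)}(P)\bigr] \bigl[(-1)^{j-i} h_2^{(j-i)}(P)\bigr],
\]
in which every factor is nonnegative whenever $j \leq \min(m_1, m_2)$; nonnegativity of $h_1$ and $h_2$ themselves is needed for the boundary terms $i=0$ and $i=j$.

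Given this lemma, the induction closes as follows. Assuming $g_k$ is $(d(\madj)+1-k)$-times monotone, I would first check that $-g_k'$ is $(d(\madj)-k)$-times monotone: its $j$-th derivative is $-g_k^{(j+1)}$, so $(-1)^j(-g_k')^{(j)} = (-1)^{j+1} g_k^{(j+1)} \geq 0$ for $j=1,\dots,d(\madj)-k$ by shifting the index in the induction hypothesis. Nonnegativity of $-g_k'$ itself comes from the $j=1$ case of the hypothesis. Since $g$ is $d(\madj)$-times monotone and strictly positive, the product lemma applied to $g_{k+1} = (-g_k') \cdot g$ yields that $g_{k+1}$ is $\min(d(\madj)-k,\,d(\madj)) = d(\madj)+1-(k+1)$-times monotone.

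The main obstacle is essentially bookkeeping---keeping the nonnegativity hypothesis alive through each induction step so that the Leibniz sum remains a sum of nonnegative terms, and verifying that the index in the monotonicity conclusion decreases by exactly one (not two) when differentiating $g_k$ once. There is no substantive analytic difficulty beyond the product identity above.
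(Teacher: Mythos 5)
Your proof is correct and follows essentially the same route as the paper: induction on $k$ using the recursion $g_{k+1}=-g_k'\,g$, with the base case supplied by \cref{A:ass_monotone}. The only difference is that the paper asserts the inductive step directly, while you make explicit the Leibniz-type product lemma (with the needed nonnegativity of $-g_k'$ and $g$) that justifies it.
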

  \begin{proof}
    $g_1(P) = g(P) = -\frac{D(P)}{D'(P)}$ is $d(\madj)$-times monotone by \cref{A:ass_monotone}. 
    Therefore $g'(P)$ is $(d(\madj)-1)$-times and $g_2(P) = -g_1'(P) g(P)$ is $(d(\madj)-1)$-times monotone. The rest follows by induction in the same way, if $g_k(P)$ is $(d(\madj)+1-k)$-times monotone, then $g_{k+1}(P) = -g_k'(P) g(P)$ is $(d(\madj)-k)$-times monotone.
  \end{proof}
  \begin{lemma}[Monotonicity of $f(P), f_i(P)$] \label{L:ffimonotone}
    The following monotonicity properties hold
    \begin{enumerate}
        \item $f(P) = P - C - \sum_{k=1}^n \vone' \madj^{k-1} \vone g_k(P)$ is strictly increasing,
        \item $f_i(P) = c_i - \sum_{k=1}^n \ve_i' \madj^{k-1} \vone g_k(P)$ is strictly increasing for each $i \in \{1,\dots,n\}$,
        \item $f_i'(P) g(P) = \sum_{k=1}^n \ve_i' \madj^{k-1} \vone g_k(P)$ is (weakly) decreasing for each $i \in \{1,\dots,n\}$.
    \end{enumerate}    
  \end{lemma}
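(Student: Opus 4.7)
The plan is to prove all three monotonicity claims by direct differentiation, leaning on \cref{L:gkmonotone} to sign the derivatives $g_k'(P)$. The key input is that for every $k \leq d(\madj)$, \cref{L:gkmonotone} gives $g_k$ at least $1$-times monotone and therefore $g_k'(P) \leq 0$; for $k > d(\madj)$ the coefficients $\vone' \madj^{k-1} \vone$ and $\ve_i' \madj^{k-1} \vone$ vanish by acyclicity of $\madj$; and for $k=1$ the function $g_1 = g$ is \emph{strictly} decreasing by \cref{A:ass_monotone}, so $g_1'(P) < 0$.

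For claim (1) I will differentiate to get $f'(P) = 1 - \sum_{k=1}^n \vone' \madj^{k-1} \vone \, g_k'(P)$. Each coefficient is non-negative and each $g_k'(P) \leq 0$, so the subtracted sum is non-positive and $f'(P) \geq 1 > 0$. Claim (2) follows by the analogous computation $f_i'(P) = -\sum_{k=1}^n \ve_i' \madj^{k-1} \vone \, g_k'(P) \geq 0$; strict positivity comes from isolating the $k=1$ summand, which equals $-g'(P) > 0$ since $\ve_i' \madj^0 \vone = 1$. For claim (3) I will differentiate the sum $\sum_{k=1}^n \ve_i' \madj^{k-1} \vone \, g_k(P)$ term-by-term to obtain $\sum_{k=1}^n \ve_i' \madj^{k-1} \vone \, g_k'(P) \leq 0$, a non-positive combination of non-negative coefficients with non-positive derivatives, giving weak decreasingness.

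The main obstacle is not this lemma, whose proof is essentially sign bookkeeping once \cref{L:gkmonotone} is available, but rather ensuring that the monotonicity orders supplied by that earlier lemma are actually sufficient for every term appearing. Acyclicity is what makes this work: the sums effectively truncate at $k = d(\madj)$, so the finite $(d(\madj)+1-k)$-times monotonicity of $g_k$ always yields at least $g_k'(P) \leq 0$ in every term that matters. The only further subtlety is remembering to isolate the $k=1$ summand in parts (1) and (2) to upgrade the weak inequality to a strict one, using strict decreasingness of $g$ itself.
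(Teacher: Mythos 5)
Your proof is correct and essentially identical to the paper's: both rest on the non-negative integer coefficients $\vone'\madj^{k-1}\vone$ and $\ve_i'\madj^{k-1}\vone$, the (at least) one-time monotonicity of each relevant $g_k$ supplied by \cref{L:gkmonotone} (with terms $k>d(\madj)$ dropping out), and isolating the $k=1$ term, where $\ve_i'\madj^{0}\vone=1$ and $g_1=g$ is strictly decreasing, to upgrade weak to strict monotonicity. One small caveat: strict decreasingness of $g$ does not justify your pointwise claim $g'(P)<0$ (it only gives $g'(P)\le 0$), so for the strict parts it is cleaner to argue, as the paper does, at the level of the functions---the $k=1$ summand is strictly decreasing and the remaining summands weakly decreasing---or to note that $f_i'\ge -g'\ge 0$ cannot vanish on an interval without contradicting strict monotonicity of $g$.
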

  \begin{proof}
    Each $\vone' \madj^{k-1} \vone$ and $\ve_i' \madj^{k-1} \vone$ is a non-negative integer and each $g_k(P)$ weakly decreasing in $-P$ by \cref{L:gkmonotone}, which implies weak monotonicity of $f_i'(P) g(P)$.
    Moreover, when $k=1$, then $g_1(P)=g(P)$ which is strictly decreasing by \cref{A:ass_monotone} and $\ve_i' \madj^{k-1} \vone = 1>0$, which implies that $f_i(P)$ is strictly increasing. As $P-C$ is strictly increasing, then $f(P)$ is also strictly increasing.
  \end{proof}
\end{proof}

\subsection{Proof of \texorpdfstring{\cref{L:logit}}{lemma 2}}

\begin{proof}
  Using the facts that $g(P)=\frac{1}{\alpha} \left[1+e^{-\alpha P} \right] > \frac{1}{\alpha}$ and $g_k(P)>0$ for all $k>0$, \cref{E:equilibrium} gives
    $P^*
      = C+\sum_{k=1}^n \vone' \madj^{k-1} \vone g_k(P^*) 
      > C+n g(P^*)
      > C+\frac{n}{\alpha}$.
  Similarly for individual prices, $p_i^*
    = c_i + \sum_{k=1}^n \ve_i' \madj^{k-1} \vone g_k(P^*)
    > c_i + g(P^*)
    > c_i + \frac{1}{\alpha}$.
    
  Using the lower bound for $P^*$, we can bound $e^{-\alpha P^*} < 
  e^{-\alpha C - \alpha \frac{n}{\alpha}}
  =
  e^{-\alpha C} e^{- n}$. Therefore $e^{-\alpha P^*} = O(e^{-n})$.
  I use this result to prove \cref{E:logitgks} that shows that $g_1(P^*) = \frac{1}{\alpha} + O(e^{-n})$ and $g_k(P^*) = O(e^{-n})$ for all $k>1$. Therefore \cref{E:equilibrium} gives
  \begin{equation}
    P^*
    = C+\sum_{k=1}^n \vone' \madj^{k-1} \vone g_k(P^*)
    = C+\frac{n}{\alpha} + O(e^{-n}) \bonacich(\madj),
  \end{equation}
  where $\bonacich(\madj) = \sum_{k=1}^n \vone' \madj^{k-1} \vone$.
  Now, note that $\bonacich(\madj)$ increases each time an edge is added to $\madj$, so its upper bound is when the network is most connected (fully sequential decisions) and lower bound with least connected network (simultaneous decisions), so that $n \leq \bonacich(\madj) \leq 2^n-1$. Therefore $\bonacich(\madj) = O(2^n)$. Inserting this observation to previous expression gives
  $P^* = C + \frac{n}{\alpha} + O\left( \left[ \frac{2}{e} \right]^{n} \right)$.
  Finally, for the equilibrium expression for individual prices is
  \begin{equation}
    p_i^*  = c_i 
    + \sum_{k=1}^n \ve_i' \madj^{k-1} \vone g_k(P^*)
    = c_i + \frac{1}{\alpha} + O(e^{-n}) \bonacich_i(\madj),
  \end{equation}
  where $B_i(\madj) = \sum_{k=1}^n \ve_i' \madj^{k-1} \vone$, which is by the same arguments as above $\bonacich_k(\madj) = O(2^n)$ and therefore $p_i = c_i + \frac{1}{\alpha} + O \left( \left[ \frac{2}{e} \right]^n \right)$.
\end{proof}

\begin{lemma} \label{E:logitgks}
With logit demand $D(P)=d \frac{e^{-\alpha}}{1+e^{-\alpha P}}$, functions $g_k(P)$ and their derivatives have the following limit properties at $P=P^*$
\begin{enumerate}
    \item $g_1(P^*) = \frac{1}{\alpha} + O(e^{-n}) = O(1)$, $g_k(P^*)=O(e^{-n})$ for all $k \in \{2,\dots,n\}$,
    \item $\dd[l]{g_k(P^*)}{P} = O(e^{-n})$ for all $k,l \in \{1,\dots,n\}$.
\end{enumerate}
\end{lemma}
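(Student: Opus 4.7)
The plan is to exploit the specific algebraic structure of logit demand and track how it propagates through the recursion $g_{k+1}(P) = -g_k'(P) g_1(P)$. The central device is the change of variable $x = e^{-\alpha P}$: under this substitution, $g_1(P) = \frac{1}{\alpha}(1+x)$, and crucially $\dd{x^m}{P} = -m\alpha\, x^m$, so differentiation with respect to $P$ merely rescales each monomial $x^m$ by $-m\alpha$ without changing its degree in $x$. The lower bound $P^* > C + \frac{n}{\alpha}$ established in Part~1 of \cref{L:logit} (which is proved independently using only positivity of the $g_k$'s and the bound $g_1(P) > \frac{1}{\alpha}$, so there is no circularity) yields $x = e^{-\alpha P^*} \leq e^{-\alpha C} e^{-n} = O(e^{-n})$. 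This bound on $x$ will be the sole input that converts structural claims about $g_k$ into the desired asymptotics.

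First I would dispatch the $k=1$ case directly: $g_1(P^*) = \frac{1}{\alpha} + \frac{1}{\alpha} x = \frac{1}{\alpha} + O(e^{-n})$, and for every $l \geq 1$ a direct computation gives $\dd[l]{g_1(P^*)}{P} = (-1)^l \alpha^{l-1} e^{-\alpha P^*} = O(e^{-n})$.

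Next I would prove by induction on $k \geq 2$ the following structural invariant: $g_k(P)$ is a polynomial in $x$ with \emph{no constant term}, i.e.\ $g_k(P) = \sum_{m=1}^{M_k} a_{k,m}\, x^m$. The base case $k=2$ is immediate, since $g_2(P) = -g_1'(P)\, g_1(P) = x \cdot \frac{1}{\alpha}(1+x) = \frac{1}{\alpha}(x+x^2)$. For the inductive step, because differentiation with respect to $P$ preserves the degree in $x$, we have $g_k'(P) = \sum_m a_{k,m}(-m\alpha)\,x^m$, which is again a polynomial in $x$ with no constant term; multiplying by $g_1(P) = \frac{1}{\alpha}(1+x)$ yields $g_{k+1}(P)$ as a polynomial in $x$ that still has no constant term (the $1$ of $g_1$ multiplied by a no-constant-term polynomial remains no-constant-term). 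The same observation handles higher derivatives: applying $\dd{}{P}$ repeatedly only rescales monomials by powers of $-m\alpha$, so $\dd[l]{g_k(P)}{P}$ remains a polynomial in $x$ with no constant term for every $l \geq 0$ and $k \geq 2$. Evaluating any such polynomial at $P=P^*$, and using $x = O(e^{-n})$ together with the fact that its coefficients and degrees depend only on $\alpha$ and $k \leq n$ (polynomially, not exponentially, in $n$), gives $g_k(P^*) = O(e^{-n})$ and $\dd[l]{g_k(P^*)}{P} = O(e^{-n})$.

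The main obstacle is not conceptual but bookkeeping: one must check carefully that the recursion never accidentally produces a nonzero constant term in $x$, which is exactly the structural feature that makes logit special (had $g_1$ been purely exponential, even the nonzero constant $\frac{1}{\alpha}$ in $g_1$ would not create the issue; had $g_1$ been purely polynomial in $P$, as in the linear case, the argument would not apply at all). Once this invariant is in place, the conversion from structure to asymptotics is a one-line estimate.
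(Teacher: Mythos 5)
Your structural idea is sound and is in fact a sharpened, more explicit version of what the paper does: writing $x=e^{-\alpha P}$, noting $g_1(P)=\frac{1}{\alpha}(1+x)$ and that $\dd{}{P}$ maps $x^m$ to $-m\alpha x^m$, and proving that for $k\geq 2$ the function $g_k$ (and every $P$-derivative of it) is a polynomial in $x$ with no constant term is a correct invariant, and combining it with $x=e^{-\alpha P^*}\leq e^{-\alpha C}e^{-n}$ from part~1 of \cref{L:logit} is the right skeleton. The paper never writes these polynomials; it runs the induction directly on $O(e^{-n})$ estimates for $g_k$ and its derivatives via the Leibniz rule applied to $g_{k+1}=-g_k'g$.

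The gap is in your closing ``one-line estimate.'' The claim that the coefficients of $g_k$ as a polynomial in $x$ ``depend polynomially, not exponentially, on $n$'' is false. From $g_{k+1}=-g_k'g_1$ the coefficients obey $a_{k+1,m}=m\,a_{k,m}+(m-1)\,a_{k,m-1}$, whence $\alpha\,a_{k,m}=(m-1)!\,S(k,m)$ with $S(k,m)$ the Stirling numbers of the second kind; in particular the leading coefficient of $g_k$ is $(k-1)!/\alpha$, which is super-exponential in $k$, and $k$ ranges up to $n$ in the statement. Consequently the crude bound ``(sum of coefficients)$\times x$'' gives roughly $k!\,e^{-n}$, which diverges for $k$ of order $n$, so the step as you justify it does not yield $O(e^{-n})$. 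The estimate can be repaired, but only by pairing each coefficient with its own power of $x$: the coefficient of $x^1$ equals $1/\alpha$ for every $k\geq 2$ (it is a fixed point of the recursion), while for $m\geq 2$ one has $a_{k,m}\leq m^{k-1}/\alpha$ and $x^m\leq e^{-m\alpha C}e^{-mn}$, so each such term is at most $\frac{1}{\alpha}e^{-n(m-\ln m)}\leq \frac{1}{\alpha}e^{-(2-\ln 2)n}$ and the at most $n$ of them sum to $o(e^{-n})$; the analogous care is needed for the derivative claims, since $\dd[l]{}{P}$ multiplies the $x^m$-coefficient by $(m\alpha)^l$. So your invariant is right and the conclusion stands, but the quantitative bookkeeping---which is precisely the content of the lemma once $k$ and $l$ may grow with $n$---is missing and cannot be waved through with the polynomial-coefficient claim. (To be fair, the paper's own induction is silent about how its implied constants depend on $k$; but it does not rest on an incorrect assertion about the coefficients.)
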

\begin{proof}
Consider $g_1(P)$ first. We get $g_1(P^*) = g(P^*) = \frac{1}{\alpha} +\frac{1}{\alpha} e^{-\alpha P^*} = \frac{1}{\alpha} + O(e^{-n}) = O(1)$. The derivatives $\dd[l]{g_1(P^*)}{P} = (-\alpha)^{k-1} e^{-\alpha P^*} = O(e^{-n})$. The rest of the proof is by induction. Suppose that the claim holds for $g_1,\dots,g_k$. Now,
 $g_{k+1}(P^*) = - g_k'(P^*) g(P^*) = O(e^{-n})$ as $g(P^*) = O(1)$ and $g_k'(P^*)=O(e^{-n})$ by induction assumption. Each derivative
 \begin{equation}
   \dd[l]{g_{k+1}(P^*)}{P}
   = 
   - \sum_{j=0}^l \binom{l}{j} g_k^{(l - j+1)}(P^*)g^{(j)}(P^*)
 \end{equation}
 Each $g_k^{(l - j+1)}(P^*)=O(e^{-n})$ by induction assumption (as $l-j+1 \geq 1$). When $j=0$, then $g^{(j)}(P^*) = g(P^*) = O(1)$. Therefore the first element of the sum is $g_k^{(l - j+1)}(P^*)g^{(j)}(P^*) = O(e^{-n})$. 
 For all other elements $j > 0$, so the term $g^{(j)}(P^*)=O(e^{-n})$ and therefore each $g_k^{(l - j+1)}(P^*)g^{(j)}(P^*) = O(e^{-2n})$, which is dominated asymptotically by $O(e^{-n})$. This proves that $\dd[l]{g_{k+1}(P^*)}{P} = O(e^{-n})$.
\end{proof}

\end{document}